\def\useieeelayout{1}
\def\review{0} % Only takes effect if \useieeelayout = 0
\newtheorem{theorem}{Theorem}
\newtheorem{lemma}{Lemma}
\newtheorem{proposition}{Proposition}
\newtheorem{definition}{Definition}
\newtheorem{assumption}{Assumption}
\newtheorem{remark}{Remark}
\def\mytitle{Data-driven memory-dependent abstractions of dynamical systems via a Cantor-Kantorovich metric}
\def\BibTeX{{\rm B\kern-.05em{\sc i\kern-.025em b}\kern-.08em
    T\kern-.1667em\lower.7ex\hbox{E}\kern-.125emX}}
\begin{document}
\title{\mytitle}

\author{Adrien Banse, Licio Romao, Alessandro Abate and Rapha\"el M. Jungers
\thanks{This project has received funding from the European Research Council (ERC) under the European Union’s Horizon 2020 research and innovation program under grant agreement No 864017 (L2C). A. Banse is supported by the French Community of Belgium in the framework of a FNRS/FRIA grant and R. M. Jungers is a FNRS honorary Research Associate.}
\thanks{A. Banse and R. M. Jungers are with the ICTEAM, UCLouvain, Louvain-la-Neuve, Belgium (e-mails: \{adrien.banse, raphael.jungers\}@uclouvain.be). L. Romao is with the Department of Aeronautics \& Astronautics, Stanford University, Stanford, USA (e-mail: licio@stanford.edu). A. Abate is with the Department of Computer Science, University of Oxford, Oxford, UK (e-mail: alessandro.abate@cs.ox.ac.uk).}
}

\maketitle

\begin{abstract}
    Abstractions of dynamical systems enable their verification and the design of feedback controllers using simpler, usually discrete, models. In this paper, we propose a data-driven abstraction mechanism based on a novel metric between Markov models. Our approach is based purely on observing output labels of the underlying dynamics, thus opening the road for a fully data-driven approach to construct abstractions. Another feature of the proposed approach is the use of memory to better represent the dynamics in a given region of the state space. We show through numerical examples the usefulness of the proposed methodology.
\end{abstract}

\if\useieeelayout1
\begin{IEEEkeywords}
Abstraction of dynamical systems, Markov models, Formal methods%Abstractions, formal methods, memory, metric between Markov models
\end{IEEEkeywords}
\else
\textbf{\textit{Keywords ---}} Abstractions, formal methods, memory, metric between Markov models
\fi

\section{Introduction}
\label{sec:introduction}

\if\useieeelayout0 \if\review1 
\linenumbers 
\doublespacing
\fi \fi

The complexity of dynamical systems emerging from several industrial applications has dramatically increased in the past years, which raises additional challenges for their analysis and control \cite{Lee2015, Kim2012, Alur2015,Lee2016}. \emph{Abstraction techniques} provide a way to tame complexity in the verification and/or control design step by producing a (usually discrete) representation of the underlying dynamical system (see e.g. \cite{Tabuada2009,Badings2023,Badings2023b,pmlr-v211-banse23a,Lavaei2022, vanderSchaft2004}). The resulting abstract models are then used to indirectly verify the concrete model, or to design feedback controllers by means of a procedure referred to as control refinement \cite{HSA17}.

%Due to major technological upheavals, the complexity of many dynamical systems has dramatically increased in recent years, making their control and analysis more and more challenging. This paradigm shift has been coined by the academic community as the \emph{Cyber-Physical revolution} \cite{Lee2015, Kim2012, Alur2015,Lee2016}. In this context, \emph{formal methods} and \emph{abstraction techniques} are at the center of an important research effort (see e.g. \cite{Tabuada2009, vanderSchaft2004}). Abstraction-based methods extract a symbolic model from a dynamical systems while preserving all properties of interest \cite{Alur2000}, reducing them to simpler objects to analyse, such as automata or Markov models (see \cite{Tabuada2009,Badings2023,pmlr-v211-banse23a,Lavaei2022} for examples).

Two of the main limitations for creating abstractions of dynamical systems are the curse of dimensionality and the reliance on the mathematical model representing the dynamics. The first limitation results from the partitioning of the state space of the concrete model into a finite set of blocks. For a given accuracy, the number of finite states grows exponentially with the dimension of the initial system (see e.g. \cite{Reissig2014}). This leads to discrete representations that, in order to meet specific accuracy levels, require prohibitive computational resources. For this reason, model-based methods to partition the state space in a smarter fashion have appeared in the literature \cite{SA13}. Amongst them, \emph{memory-dependent} abstractions have been used in \cite{Schmuck2014, Majumdar2020} to mitigate this issue. The second limitation has attracted much attention in recent years in the control community: in many applications, the system to control is too complicated for a model-based analysis, or a given mechanistic model is not even available. Therefore, there is a growing need for \emph{data-driven} control approaches, namely techniques that are purely based on the observation of data from the system, thus without resorting to the cumbersome and potentially error-prone work of building a model for the system \cite{DePersis2020,Wang2021,Berberich2021}.

Our new approach, presented in this paper, aims at mitigating both of these limitations, by enabling the construction of a Markov model purely based on output data and allowing for a smart and frugal
% (as we shall see) 
refinement of the state-space partitions. To enrich the representation of our discrete model, we consider memory-dependent Markov models, and propose an algorithmic procedure to further enrich the model in specific regions of the state space. Previous work such as \cite{Coppola2023, Coppola2024} also combine memory-dependent and data-driven approaches, but does not profit from an adaptive approach and use deterministic abstractions. Other previous works, such as \cite{Badings2023,Badings2023b}, use stochastic models such as Interval Markov Decision Processes to circumvent the uncertainty on the concrete system, but do not leverage memory.

To enable a non-uniform partitioning of the state space of the dynamics, we develop a novel notion of \emph{metric between Markov models} that relies on the theory of optimal transport. More precisely, we leverage the Kantorovich metric in order to evaluate the difference between two Markov models in terms of the probability distributions that they define on the output language. To define this optimal transport metric, we equip the space of words with the Cantor distance\footnote{In order to ease the reading of this paper, we use the term \emph{metric} for the Kantorovich metric and the term \emph{distance} for the underlying Cantor distance.}. Kantorovich metrics for Markov models have already been studied in \cite{Desharnais2004, Breugel2007, Madras2010, Rudolf2018}, but with different underlying metrics. The Cantor distance has been widely investigated in the field of symbolic dynamics \cite{Lind1995, Fogg2002, Rozenberg2002}, leading to interesting results about its topological structure. We also present a new algorithm that efficiently computes the proposed metric between Markov models.

%At the center of our procedure is a notion of \emph{metric} between the Markov models. In recent years, computing metric between Markov models has been an active research topic in the computer science community (see e.g. \cite{Desharnais2004, Garcia2022, Chen2014,Rached2004}), for example for their applications in reinforcement learning \cite{Song2016}. In this work, we introduce a notion of \emph{Cantor-Kantorovich metric} that measures a discounted difference between the behaviours of two Markov models. Kantorovich metrics for Markov models have already been studied in \cite{Desharnais2004, Breugel2007, Madras2010, Rudolf2018}, but with different underlying metrics. The Cantor metric is classically used in the field of \emph{symbolic dynamics} \cite{Lind1995, Fogg2002, Rozenberg2002}, and is crucial both for computational aspects and for building abstractions that have a meaningful and natural interpretation. We show in this work that the Cantor-Kantorovich metric is well-defined, and we propose an algorithm for its computation that yields a better computational complexity than usual techniques for computing Kantorovich metrics. 

\textbf{Contributions}\footnote{Preliminary results have been presented in \cite{Banse2023}, but without proofs. Besides, this work  considers a more general model for abstracting dynamical systems, and presents a new, thorough numerical example.} We develop a new framework to abstract dynamical systems. The proposed abstraction technique leverages memory to implicitly build a non-uniform partition of the state space, hence enabling the refinement of the abstraction in regions where the underlying dynamics has more intricate behaviours. A second contribution consists in proposing a new metric between Markov models based on optimal transport, and showing a recursive algorithm for its computation that improves upon a na\"ive linear programming formulation. With that, we also contribute to well established research efforts outside of the control community (see e.g. \cite{Desharnais2004, Garcia2022, Chen2014,Rached2004}).

\textbf{Outline}\,\, The rest of this paper is organised as follows. In Section~\ref{sec:preli}, we provide theoretical background for dynamical systems and labeled Markov chains. In Section~\ref{sec:abstractions}, we introduce adaptive memory abstractions, and show that they satisfy behavioural inclusion (see \cite{Tabuada2009}).
We provide a method to choose a convenient partitioning given a metric between Markov chains, and prove that it always yields well-defined abstractions. In Section~\ref{sec:ck-distance}, we introduce a specific Cantor-Kantorovich metric between Markov chains. We prove that it is indeed a metric, and we give an algorithm to approximate it efficiently. In Section~\ref{sec:application}, we illustrate our procedure with the Cantor-Kantorovich metric on a numerical example. We finally conclude in Section~\ref{sec:conclusions}.

\textbf{Notations} In this work, $\mathbb{R}$ is the set of reals, $\mathbb{N}$ is the set of natural and $\mathbb{N}_{\geq 0}$ is the set of non-negative natural numbers. Given a finite alphabet $A$, a \emph{word} $w \in A^n$ is denoted $w = a_1 \dots a_n$. Given $s, t \in \mathbb{N}_{\geq 0}$, a \emph{timed word}, denoted $w_{[-s, t]}$, is a couple $w \in A^{s+t+1}$ and time interval $[-s, t]$. It is denoted $w_{[-s, t]} = a_{-s}\dots a_{t}$ with small abuse of notation. Given a set $X$, its \emph{Kleene closure} is noted $X^*$, and the $i$-th \emph{functional power} of a function $f$ is noted $f^i$. In terms of computational complexity, we say that $f(n) = \mathcal{O}(g(n))$ if there exists $n_0$ and $c > 0$ such that, for all $n \geq n_0$, $|f(n)| \leq cg(n)$, and we say that $f(n) = \Omega(g(n))$ if there exists $n_0$ and $c > 0$ such that, for all $n \geq n_0$, $f(n) \geq cg(n)$.

\section{Preliminaries} \label{sec:preli}

In this section, we formally define dynamical systems and labeled Markov chains. We also introduce a notion of \emph{behaviour} for both models.
 
\subsection{Dynamical systems}

A \emph{dynamical system} is the 4-tuple $\Sigma = (X, A, f, h)$ that defines the relation
\begin{equation} \label{eq:det_sys}
    x_{k+1} = f(x_k), \quad y_k = h(x_k), 
\end{equation}
where $X \subseteq \mathbb{R}^d$ is the \emph{state space}, $A$ is a finite alphabet called the \emph{output space}, $f : X \to X$ is an invertible \emph{transition
function}, and $h : X \to A$ is the \emph{output function}. The variables $x_k$ are called \emph{states}, and variables $y_k$ are called \emph{outputs} at time $k$. 

For $s, t \in \mathbb{N}$, given a word $w_{[-s, t]} = a_{-s}\dots a_{t}$ with $a_i \in A$ for all $i = -s, \dots, t$, one can define a subset of the state space as follows
\begin{equation}
    \left[w_{[-{s}, t]}\right] = \{x \in X : h(f^{i}(x)) = a_{i} \, \forall i = -s, \dots, t\}.
\end{equation}
For example, the subset $\left[011_{[-1, 1]}\right]$ corresponds to the set of states $x \in X$ such that $h(f^{-1}(x)) = 0$, $h(x) = 1$ and $h(f(x)) = 1$. A set of words 
\begin{equation}
    W = \left\{w^1_{[-s_1, t_1]}, \dots, w^k_{[-s_k, t_k]}\right\}
\end{equation}
defines a \emph{partition} of the state space if 
\begin{gather}
    \bigcup_{i = 1, \dots, k} \left[w^{i}_{[-s_i, t_i]}\right] = X, \label{eq:partition_cond_1}  \\
    \forall i \neq j, \, \left[w^{i}_{[-s_i, t_i]}\right] \cap \left[w^{j}_{[-s_j, t_j]}\right] = \emptyset. \label{eq:partition_cond_2}
\end{gather}
In this case, a subset $[w_{[-s, t]}]$ is called a \emph{block}.

In this paper, we consider data-driven applications in which the initial state of a dynamical system is sampled. We assume that the state space $X$ is endowed with a measure $\lambda$ on the probability space $(X, \mathcal{B}(X), \lambda)$, where $\mathcal{B}(X)$ is the Borel $\sigma$-algebra generated by the topology of $X$. In other words, we consider that the system \eqref{eq:det_sys} is such that
\begin{equation}
    x_0 \sim \lambda.
\end{equation}

For clarity, for a given word $w_{[-s, t]}$, we call the quantity $\lambda\left(\left[ w_{[-s, t]} \right]\right)$ the \emph{probability} of observing the sequence $y_{-s}\dots y_t = w$, which we denote as $\mathbb{P}_{\lambda}(y_{-s}\dots y_{t} = w)$. Similarly, given two words $w_{[-s, t]}$ and $w'_{[-s', t']}$, we call the quantity 
\begin{equation}
    \lambda\left(f^{-1}\left(\left[ w'_{[-s', t']} \right]\right) \cap \left[ w_{[-s, t]} \right]\right) / \lambda\left(\left[ w_{[-s, t]} \right]\right)
    % \frac{
    %     \lambda\left(f^{-1}\left(\left[ w'_{[-s', t']} \right]\right) \cap \left[ w_{[-s, t]} \right]\right)
    % }{
    %     \lambda\left(\left[ w_{[-s, t]} \right]\right)
    % }
\end{equation}
the \emph{conditional probability} of observing $y_{-s'+1}\dots y_{t'+1} = w'$ knowing that $y_{-s}\dots y_{t} = w$, which we denote by  $\mathbb{P}_{\lambda}(y_{-s'+1}\dots y_{t'+1} = w' | y_{-s}\dots y_{t} = w)$. 

We now introduce the notion of finite \emph{behaviour} for this class of dynamical systems. The behaviour of a dynamical system $\Sigma$ with initial measure $\lambda$, denoted as $B(\Sigma, \lambda) \subseteq A^*$, is the set of finite sequences $w \in A^k$, for all $k \in \mathbb{N}$, such that $\mathbb{P}_{\lambda}(y_0 \dots y_{k-1} =w) > 0$. In other words, the behaviour of $\Sigma$ contains the set of words that can be reached with positive probability.

\subsection{Labeled Markov chains}

A labeled Markov chain is defined as a 5-tuple $\Gamma = (S, A, \tau, \mu, l)$ where $S$ is a finite set of \emph{states}, $A$ is a finite \emph{alphabet}, $\tau$ is the \emph{transition matrix} defined on $S \times S$, $\mu$ is an \emph{initial measure} defined on $S$ and $l : S \to A$ is a \emph{labeling function}. For two states $s, s' \in S$, the entry $\tau_{s, s'}$ of the transition matrix is defined as 
\begin{equation}
    \tau_{s, s'} = \mathbb{P}(X_{k+1} = s' \, | \, X_k = s), 
\end{equation}
where $X_1, X_2, \dots$ is a sequence of random variables.

Consider the equivalence relation on $S$ defined as $s \sim s'$ if and only if $l(s) = l(s')$. For any $a \in A$, the notion of \emph{equivalence class} is therefore defined as 
\begin{equation}
    \llbracket a \rrbracket = \{s \in S : l(s) = a\}.
\end{equation}
For any sequence of labels $w = a_1\dots a_k$ of length $k$, this allows to define its probability induced by the Markov chain as 
\begin{equation} \label{eq:prob_induced}
    p^k(w) = \sum_{s_1 \in \llbracket a_1 \rrbracket} \mu_{s_1} \sum_{s_2 \in \llbracket a_2 \rrbracket} \tau_{s_1, s_2} \dots \sum_{s_n \in \llbracket a_n \rrbracket} \tau_{s_{n-1}, s_n}.
\end{equation}

\begin{remark} \label{rem:prob_complexity} 
    Using a similar algorithm as the \emph{forward backward procedure} for hidden Markov models (see \cite{Rabiner1989} and references therein), it is possible to compute $p^l(w)$  for all $w \in A^l$ for all $l = 1, \dots, k$ in $k|S|^2$ operations. \hfill $\square$
\end{remark}

Similarly, we define the finite \emph{behaviour} of the labeled Markov chain $\Gamma$, noted $B(\Gamma) \subseteq A^*$, as the set of all finite sequences $w \in A^k$, for all $k \in \mathbb{N}$, such that $p^k(w) > 0$.

\section{A procedure to construct memory-dependent abstractions} \label{sec:abstractions}

In this section, we first introduce the notion of \emph{adaptive memory abstractions}. We then propose a data-driven procedure to construct tuneable adaptive memory abstractions. 

\subsection{Adaptive-memory abstractions}

We introduce in Definition~\ref{def:abstraction} the notion of \emph{adaptive memory abstraction}, and we give an illustration in Figure~\ref{fig:abstraction}.
\begin{definition}[Adaptive memory abstraction] \label{def:abstraction}
    Given a dynamical system $\Sigma$, a measure $\lambda$ and a set $W$ that defines a partition, an \emph{adaptive memory abstraction} of $\Sigma$ is a labeled Markov chain $\Gamma_{W} = (S, A, \tau, \mu, l)$ composed as follows. 
    \begin{itemize}
        \item The states correspond to the blocks of the partition, that is $S = W$ 
        \item For each node $w_{[-s, t]}$, the initial measure is defined as 
        \begin{equation} \label{eq:initial_MC}
            \mu_{w_{[-s, t]}} = \mathbb{P}_\lambda\left(y_{-s}\dots y_{t} = w\right)
        \end{equation}
        \item For each two nodes $w_{[-s, t]}$ and $w'_{[-s', t']}$ the transition probability is defined as
        \begin{equation} \label{eq:cond_MC}
        \begin{aligned}
            &\tau_{w_{[-s, t]}, w'_{[-s', t']}}\\
            &\quad= \mathbb{P}_{\lambda}\left(
                y_{-s'+1}\dots y_{t'+1} = w'
                \,\middle|\,
               y_{-s}\dots y_{t} = w
            \right) 
        \end{aligned}
        \end{equation}
        \item For each node $w_{[-s, t]} = a_{-s}\dots a_t$, the labeling is defined as $l\left( w_{[-s, t]} \right) = a_0$ \hfill $\square$
    \end{itemize}
\end{definition}

\if\useieeelayout1
\begin{figure}[t]
    \centering
    \includegraphics[width = 0.7\linewidth]{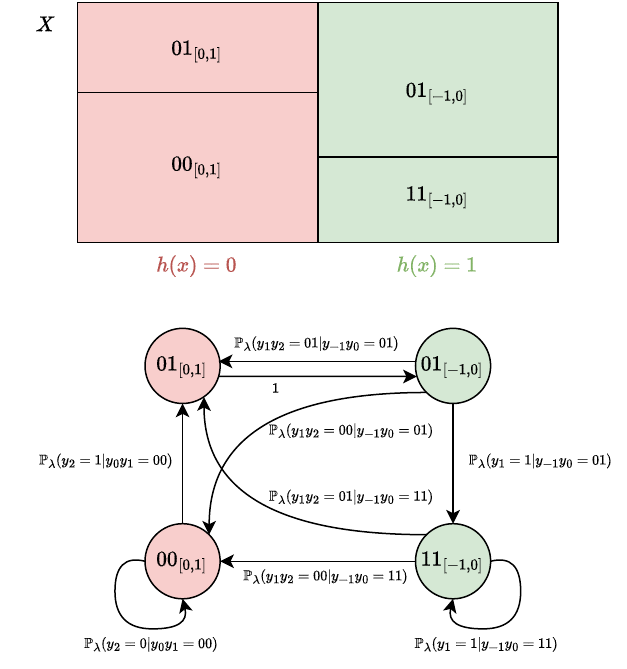}
    \caption{Illustration of an adaptive memory abstraction $\Gamma_W$ for a certain dynamical system $\Sigma$ with measure $\lambda$. In this example, $W = \{01_{[0, 1]}, 00_{[0, 1]}, 01_{[-1, 0]}, 11_{[-1, 0]}\}$, as illustrated above. The corresponding abstraction $\Gamma_W$ is given below, with all the possibly non-zero transition probabilities.} %The states of the abstraction in red are such that $L(s) = 0$, and the one in green are such that $L(s) = 1$.}
    \label{fig:abstraction}
\end{figure}
\else
\begin{figure}[ht]
    \centering
    \includegraphics[width = 0.5\linewidth]{figures/abstraction.pdf}
    \caption{Illustration of an adaptive memory abstraction $\Gamma_W$ for a certain dynamical system $\Sigma$ with measure $\lambda$. In this example, $W = \{01_{[0, 1]}, 00_{[0, 1]}, 01_{[-1, 0]}, 11_{[-1, 0]}\}$, as illustrated above. The corresponding abstraction $\Gamma_W$ is given below, with all the possibly non-zero transition probabilities.} %The states of the abstraction in red are such that $L(s) = 0$, and the one in green are such that $L(s) = 1$.}
    \label{fig:abstraction}
\end{figure}
\fi 

Since $x_0 \sim \lambda$, one needs to make sure that all block $\left[w_{[-s, t]}\right] \subseteq X$ can be captured by the sampling measure.
\begin{assumption} \label{ass:mc_well_defined}
    For all $w_{[-s, t]} \in W$, $\mathbb{P}_{\lambda}(y_{-s}\dots y_t = w) > 0$.
\end{assumption}
Assumption~\ref{ass:mc_well_defined} is a natural necessary condition in the context of data-driven methods. Informally, it states that the behaviour of the dynamical system can be reliably sampled. On a more technical level, it implies that the transition matrix of the corresponding abstraction is stochastic. For the sake of brevity, this result and its proof can be found in Appendix~\ref{app:stochastic_matrix}.

% Assumption~\ref{ass:mc_well_defined} is a sufficient condition for the abstraction $\Gamma_W$ to have a stochastic matrix. A proof of this can be found in Appendix~\ref{app:stochastic_matrix}.

Given a partition $W$, one can sample trajectories and compute the probabilities \eqref{eq:initial_MC} and \eqref{eq:cond_MC} in a Monte-Carlo fashion. For this reason, we say that these abstractions are \emph{data-driven}. Finally, we say that these abstractions are \emph{safe} because, for any partition $W$, all finite sequence output by the dynamical system $\Sigma$ with initial measure $\lambda$ can be simulated by $\Gamma_W$. This is formally written in the following proposition.

\begin{proposition}
    Given a dynamical system $\Sigma$, a measure $\lambda$ and its adaptive memory abstraction $\Gamma_W$, if Assumption~\ref{ass:mc_well_defined} is satisfied, then it holds that $B(\Sigma, \lambda) \subseteq B(\Gamma_W)$.
\end{proposition}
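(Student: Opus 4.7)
The plan is to exhibit, for any $w = a_0 \cdots a_{k-1} \in B(\Sigma, \lambda)$, a concrete sequence of blocks in $W$ whose weight in $\Gamma_W$ is strictly positive, thereby lower-bounding $p^k(w) > 0$ and proving $w \in B(\Gamma_W)$.

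First I would use the partition structure to select a candidate sequence. By \eqref{eq:partition_cond_1}--\eqref{eq:partition_cond_2}, each $x_0 \in X$ lies in a unique block $B(x_0) \in W$, so the map $\Pi : X \to W^k$ defined by $\Pi(x_0) = (B(x_0), B(f(x_0)), \ldots, B(f^{k-1}(x_0)))$ takes only finitely many values. Partitioning $[w_{[0, k-1]}]$ accordingly and using $\lambda([w_{[0, k-1]}]) > 0$, a pigeonhole argument produces a tuple $(s_0^*, \ldots, s_{k-1}^*) \in W^k$ whose preimage $E^* = \Pi^{-1}((s_0^*, \ldots, s_{k-1}^*))$ has $\lambda(E^*) > 0$. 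Any $x_0 \in E^* \subseteq [w_{[0, k-1]}]$ satisfies both $h(f^j(x_0)) = a_j$ and $f^j(x_0) \in s_j^*$, so $l(s_j^*) = a_j$ for every $j$.

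I would then keep only the term indexed by this tuple in \eqref{eq:prob_induced} to obtain
\[
    p^k(w) \ \geq \ \mu_{s_0^*} \prod_{j=0}^{k-2} \tau_{s_j^*, s_{j+1}^*},
\]
and verify that every factor on the right is strictly positive. The initial factor satisfies $\mu_{s_0^*} = \lambda(s_0^*) \geq \lambda(E^*) > 0$ because $E^* \subseteq s_0^*$. The denominator $\lambda(s_j^*)$ of each transition factor is positive by Assumption~\ref{ass:mc_well_defined}, so the task reduces to showing that each numerator $\lambda(s_j^* \cap f^{-1}(s_{j+1}^*))$ is positive. The inclusion $f^j(E^*) \subseteq s_j^* \cap f^{-1}(s_{j+1}^*)$ — valid because $x_0 \in E^*$ forces $f^j(x_0) \in s_j^*$ and $f^{j+1}(x_0) \in s_{j+1}^*$ — handles the case $j = 0$ for free, since then $f^j(E^*) = E^*$ and $\lambda(s_0^* \cap f^{-1}(s_1^*)) \geq \lambda(E^*) > 0$.

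For $j \geq 1$, however, the inclusion alone is not enough, because a priori $\lambda(f^j(E^*))$ need not equal $\lambda(E^*)$. This is where I expect the main obstacle to lie: transferring the positive measure of $E^*$ forward through the iterates of $f$ so as to conclude $\lambda(s_j^* \cap f^{-1}(s_{j+1}^*)) > 0$. I would close this gap by combining the invertibility of $f$ with the sampling compatibility of $\lambda$ implicit in Assumption~\ref{ass:mc_well_defined} — intuitively, every block of $W$ must be ``well-sampled'' by $\lambda$ for the abstraction to be well-defined, which forces each intersection $s_j^* \cap f^{-1}(s_{j+1}^*)$, whose preimage under $f^j$ already has positive $\lambda$-measure by the inclusion, to itself carry positive mass.
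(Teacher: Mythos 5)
Your overall route is the same as the paper's: both proofs reduce the claim to exhibiting one sequence of blocks $s_0,\dots,s_{k-1}\in W$ with $l(s_j)=a_j$ and $\lambda\bigl([s_j]\cap f^{-1}([s_{j+1}])\bigr)>0$, and both find it by splitting $[w_{[0,k-1]}]$ over the finitely many block-itineraries (your pigeonhole over $\Pi$ is exactly the paper's distribution of $\bigcap_i\bigcup_{s_i\in\llbracket b_i\rrbracket}f^{-i}([s_i])$ into a finite union of intersections, one of which must have positive measure). Keeping a single term of \eqref{eq:prob_induced} and the verification of $\mu_{s_0^*}>0$ are also fine.

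The gap is in how you propose to close the step you correctly flag as the main obstacle. Assumption~\ref{ass:mc_well_defined} cannot do this job: it only asserts $\lambda([s])>0$ for each \emph{individual} block $s\in W$, and two positive-measure blocks can perfectly well satisfy $\lambda\bigl([s]\cap f^{-1}([s'])\bigr)=0$, so no ``well-sampledness'' of single blocks forces a particular intersection to carry mass. The paper instead argues in the pulled-back coordinates throughout: it never tries to push $E^*$ forward, but observes that $f^{-j}\bigl([s_j^*]\cap f^{-1}([s_{j+1}^*])\bigr)=f^{-j}([s_j^*])\cap f^{-(j+1)}([s_{j+1}^*])\supseteq\bigcap_i f^{-i}([s_i^*])\supseteq E^*$, so the \emph{preimage} of the set you care about already has positive measure (this is \eqref{eq:last_eq_proof}), and then invokes the invertibility of $f$ to transfer positivity from $f^{-j}(C)$ to $C$. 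That is, the missing ingredient is a property of the map $f$ (a bijection whose iterates do not collapse positive-measure sets to null sets), not a property of the sampling measure on blocks. Your own $j=0$ computation already contains the right mechanism; you need to apply it at every $j$ via the inclusion $E^*\subseteq f^{-j}\bigl([s_j^*]\cap f^{-1}([s_{j+1}^*])\bigr)$ rather than via $f^j(E^*)$, and then appeal to invertibility of $f$ as the paper does.
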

\begin{proof}
    It suffices to show that, for all $k \in \mathbb{N}$, for all words $w_{[0, k-1]} = b_0 \dots b_{k-1}$, if $\lambda([w_{[0, k-1]}]) > 0$ in the original system, then it holds that $p^k(w) > 0$ in the abstraction $\Gamma_W$. Since Assumption~\ref{ass:mc_well_defined} is satisfied, the condition $p^k(w) > 0$ holds if there exists a sequence of states $s_0, \dots, s_{k-1} \in W$ such that $l(s_i) = b_i$ for all $i = 0, \dots, k-1$, and $\lambda([s_i] \cap f^{-1}([s_{i+1}])) > 0$ for all $i = 0, \dots, k-2$. We will therefore prove the latter. 

    First, it holds that 
    \begin{equation}
        \left[w_{[0, k-1]}\right] = \bigcap_{i = 0, \dots, k-1} f^{-i}([(b_i)_{[0, 0]}]), 
    \end{equation}
    where $(b)_{[0, 0]}$ denotes the word composed only of the letter $b$. Moreover, since $W$ defines a partition, it holds that $[(b_i)_{[0, 0]}] = \bigcup_{s_i \in \llbracket b_i \rrbracket} [s_i]$ for all $i = 0, \dots, k-1$. Therefore, one can write
    \begin{equation}
    \begin{aligned}
        \left[w_{[0, k-1]}\right] 
        &= 
        \bigcap_{i = 0, \dots, k-1} f^{-i}\left(
        \bigcup_{s_i \in \llbracket b_i \rrbracket} [s_i]
        \right)\\
        &=
        \bigcap_{i = 0, \dots, k-1}
        \bigcup_{s_i \in \llbracket b_i \rrbracket} f^{-i}([s_i]).
    \end{aligned}
    \end{equation}
    Now, since $\lambda([w_{[0, k-1]}]) > 0$, then there exists at least a sequence $s_0, \dots, s_{k-1} \in W$ such that $l(s_i) = b_i$ and such that 
    $
        \lambda\left( \bigcap_{i = 0, \dots, k-1} f^{-i} ([s_i]) \right) > 0.
    $
    Furthermore, one can say that the inequality above implies that 
    \begin{equation} \label{eq:last_eq_proof}
        \lambda\left(f^{-i}([s_i]) \cap f^{-(i+1)}([s_{i+1}])\right) > 0
    \end{equation}
    for all $i = 0, \dots, k-2$. This is explained by the fact that, for any three sets $A, B, C$, if $\lambda(A \cap B \cap C) > 0$, then it holds that $\lambda(A \cap B) > 0$, $\lambda(A \cap C) > 0$ and $\lambda(B \cap C) > 0$. By invertibility of $f$, \eqref{eq:last_eq_proof} implies that $\lambda\left([s_i] \cap f^{-1}([s_{i+1}])\right) > 0$ for all $i = 0, \dots, k-2$, and the proof is completed.
\end{proof}

\subsection{A data-driven procedure for abstractions}

In the previous section, the choice of a convenient set $W$ is not discussed. In this section, we present a data-driven procedure that tackles this problem. Our procedure, called $\textsc{Refine}$, takes as input a dynamical system $\Sigma$, an initial measure $\lambda$, a \emph{metric} $\mathsf{D}$ between labeled Markov chains, and a number of iterations $N$. It is descripted in Algorithm~\ref{alg:adaptive}, and an illustration of this procedure is given in Figure~\ref{fig:refine}. 

\begin{algorithm}[ht!]
    \caption{$\textsc{Refine}(\Sigma, \lambda, \mathsf{D}, N)$}\label{alg:adaptive}
    \begin{algorithmic}[1]
       \STATE $W \gets \{(a)_{[0, 0]}\}_{a \in A}$ \hfill {\footnotesize Start with a coarse partition}
       \STATE Sample trajectories from $\Sigma$
       \STATE Find $A_0$ s.t. $\mathbb{P}_\lambda(y_{0} = a) = 0$ $\forall a \in A_0$
       \STATE $W \gets W \setminus \{(a)_{[0, 0]}\}_{a \in A_0}$ \label{line:4}
       \STATE Construct $\Gamma_W$ from samples of $\Sigma$ \label{checkpoint}
       \FOR{$n = 1, \dots, N$}
        \FOR{$i = 1, \dots, |W|$}
            \STATE $W_i' \gets W \setminus \left\{w^{i}_{[0, t_i]}\right\}$ \hfill {\footnotesize Try to refine each block}
            \STATE $W_i' \gets W_i' \cup \left\{(w^i a)_{[0, t_i + 1]}\right\}_{a \in A}$ 
            \STATE Sample trajectories from $\Sigma$
            \STATE Find $W_0$ s.t. $\mathbb{P}_\lambda(y_{0}\dots y_t = w) = 0$ $\forall w_{[0, t]} \in W_0$ \label{line:11}
            \STATE $W'_i \gets W'_i \setminus W_0$ \label{line:12}
            \STATE Construct $\Gamma_{W'_{i}}$ from samples of $S$
            \STATE $\mathsf{D}_i \gets \mathsf{D}(\Gamma_W, \Gamma_{W'_i})$
        \ENDFOR
          \STATE $j = \arg\max_{i = 1, \dots, |W|} \mathsf{D}_i$ \hfill {\footnotesize Greedy choice}
          \STATE $W \gets W'_j$
          \STATE $\Gamma_W \gets \Gamma_{W'_j}$
        \ENDFOR
       \RETURN $W$ \hfill {\footnotesize Return a refined partition}
    \end{algorithmic}
\end{algorithm}

\if\useieeelayout1
\begin{figure}[t]
    \centering
    \includegraphics[width = 0.8\linewidth]{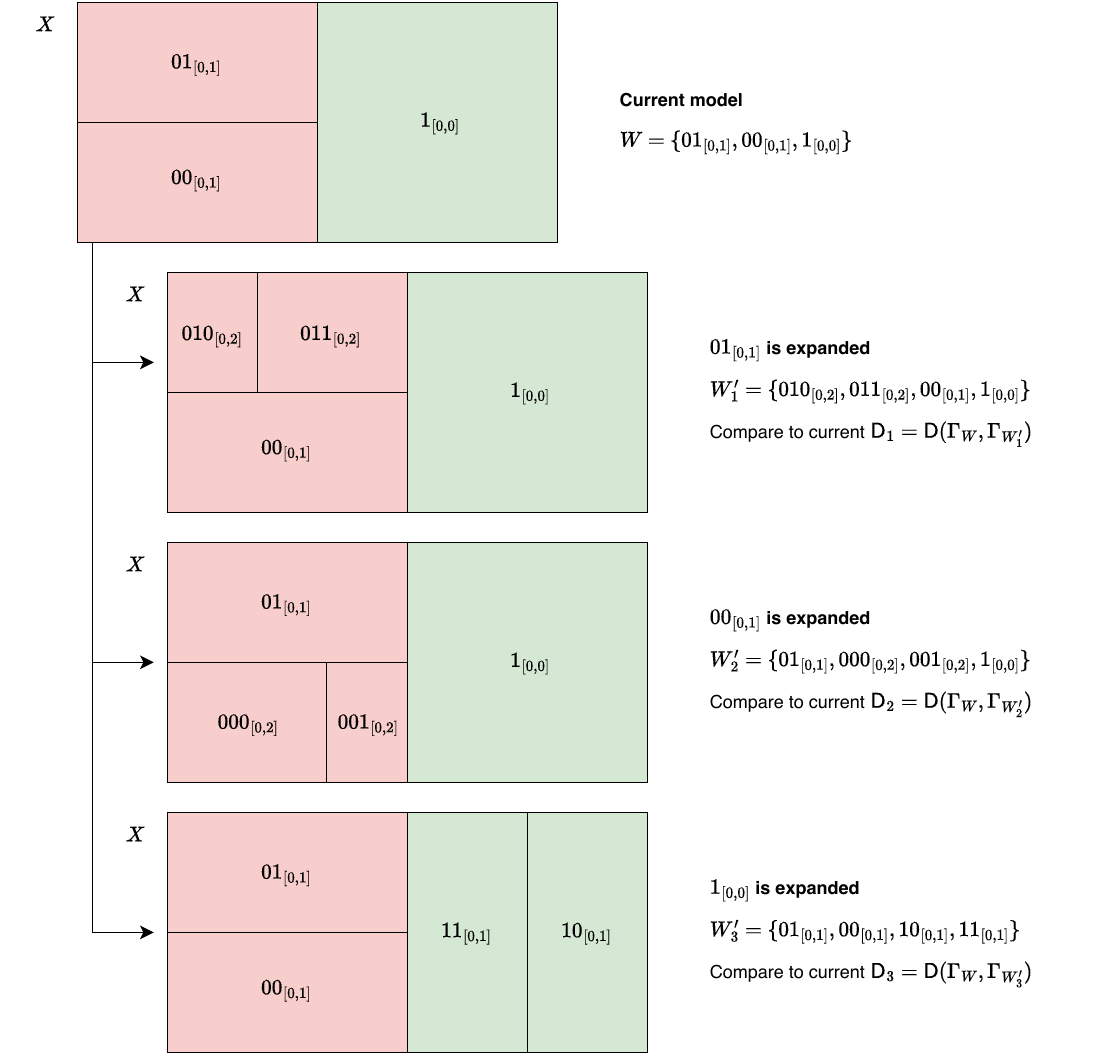}
    \caption{Illustration of the \textsc{Refine} algorithm for a system with $A = \{0, 1\}$. The current model contains three states $W = \left\{01_{[0, 1]}, 00_{[0, 1]}, 1_{[0, 0]}\right\}$. Three sets $W'_1, W'_2, W'_3$ are then constructed, each time expanding one block of the partition into $|A| = 2$ sub-blocks. For each abstraction $\Gamma_{W'_i}$, the distance with the current one is computed: the current model is updated with the model for which the distance is the largest.} 
    \label{fig:refine}
\end{figure}
\else
\begin{figure}[ht!]
    \centering
    \includegraphics[width = 0.6\linewidth]{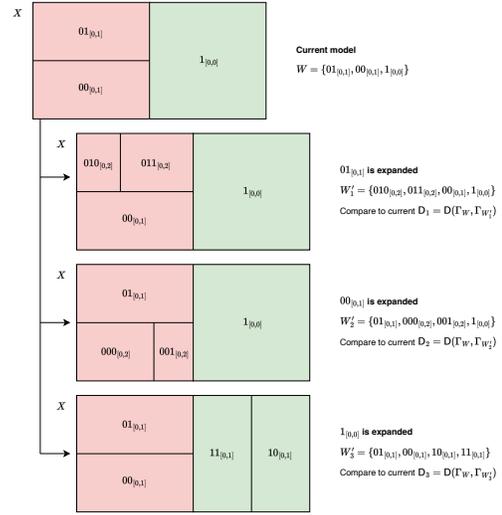}
    \caption{Illustration of the \textsc{Refine} algorithm for a system with $A = \{0, 1\}$. The current model contains three states $W = \left\{01_{[0, 1]}, 00_{[0, 1]}, 1_{[0, 0]}\right\}$. Three sets $W'_1, W'_2, W'_3$ are then constructed, each time expanding one block of the partition into $|A| = 2$ sub-blocks. For each abstraction $\Gamma_{W'_i}$, the distance with the current is computed. The current model is updated with the model for which the distance is the largest.} 
    \label{fig:refine}
\end{figure}
\fi 

The algorithmic procedure $\textsc{Refine}$ presented here depends on  a general notion of metric between Markov models, rather than a particular one as in \cite[Algorithm~2]{Banse2023}. As it is based solely on trajectories of the original dynamics, we refer to $\textsc{Refine}$ as a data-driven abstraction algorithm. Besides, as formally stated in Appendix~\ref{app:procedure}, under the assumption that any output trace can be sampled with non-zero measure, $\textsc{Refine}$ leads to a valid partition of the state space of the original dynamics.

\begin{remark}
    Many variants of the \textsc{Refine} algorithm can be considered. For example, one could expand the memory in the past by expanding each word $w^i_{[-s_i, 0]}$ into $\{(aw^i)_{[-(s_i + 1), 0]}\}_{a \in A}$. Also, inspired by algorithms from reinforcement learning (cf. TD learning scheme in \cite{Sutton}), one could compare the current model with the $|A|^n$ models possible after $n$ steps instead of choosing between the possible $|A|$ different models, and take the one for which the distance is the largest. \hfill $\square$
\end{remark}

The output $W$ from the \textsc{Refine} algorithm depends on the chosen metric between the abstractions. In the following section, we introduce a novel metric between labeled Markov chains, and we discuss the interpretation of the corresponding \textsc{Refine} output.

\section{The Cantor-Kantorovich metric} \label{sec:ck-distance}

In this section, we introduce a new metric $\mathsf{CK}(\Gamma_1, \Gamma_2)$, named Cantor-Kantorovich metric, between two labeled Markov chains. This new metric was first introduced in \cite{Banse2023}, but without proofs. Here we prove that the metric is well-defined\footnote{In the sense that it satisfies positivity, symmetry and triangle inequality.}, then we present an algorithm to approximate it efficiently. 

\subsection{Definition}

Given any two Markov chains $\Gamma_1 = (S_1, A, \tau_1, \mu_1, l_1)$ and $\Gamma_2 = (S_2, A, \tau_2, \mu_2, l_2)$ defined on the same set of outputs, for a fixed $k \in \mathbb{N}_{> 0}$, let $p_1^k : A^k \to [0, 1]$ and $p_2^k: A^k \to [0, 1]$ be the probabilities induced by $\Gamma_1$ and $\Gamma_2$ such as defined in \eqref{eq:prob_induced}.

\begin{definition}[Kantorovich metric] 
    Let $\mathsf{D} : A^k \times A^k \to \mathbb{R}_{> 0}$ be any metric between words of length $k$. The \emph{Kantorovich metric} between $p_1^k$ and $p_2^k$ is defined as
    \begin{equation} \label{eq:kant_LP}
        \mathsf{K}_{\mathsf{D}}(p_1^k, p_2^k)
        = \min_{\pi^k \in \Pi(p_1^k, p_2^k)} \sum_{w_1, w_2 \in A^k} \mathsf{D}(w_1, w_2)\pi^k(w_1, w_2), 
    \end{equation}
    % \begin{equation} \label{eq:kant_LP}
    % \begin{aligned}
    %     &K_D(p_1^k, p_2^k) \\
    %     &\quad = \min_{\pi^k \in \Pi(p_1^k, p_2^k)} \sum_{w_1, w_2 \in A^k} D(w_1, w_2)\pi^k(w_1, w_2), 
    % \end{aligned}
    % \end{equation}
    where $\Pi(p_1^k, p_2^k)$ is the set of all \emph{couplings} of $p_1^k$ and $p_2^k$, that is the set of all joint distribution $\pi^k : A^k \times A^k \to [0, 1]$ such that the constraints
    \begin{equation} \label{eq:cons_LP_pos}
        \pi^k(w_1, w_2) \geq 0 \quad \forall w_1, w_2 \in A^k,
    \end{equation}
    \begin{equation}
    \begin{gathered} \label{eq:cons_LP_marg}
        \sum_{w_2 \in A^k} \pi^k(w_1, w_2) = p_1^k(w_1) \quad \forall w_1 \in A^k, \\
        \sum_{w_1 \in A^k} \pi^k(w_1, w_2) = p_2^k(w_2) \quad \forall w_2 \in A^k
    \end{gathered}
    \end{equation}
    hold. \hfill $\square$
\end{definition}

For a given $k$, the definition of the Kantorovich metric depends on an underlying distance $\mathsf{D}$ over the set of words of length $k$. We propose to endow the latter with the \emph{Cantor distance}, defined as follows (see e.g. \cite[Section~2.1]{Rozenberg2002}). 
\begin{definition}[Cantor distance]
    The \emph{Cantor distance} between any two sequences $w_1 = a_1\dots a_k$ and $w_2 = b_1\dots b_k$ is defined as 
    \begin{equation}
        \mathsf{C}(w_1, w_2)
        = \inf\{ 2^{-|c|} : \text{$c$ is a common prefix of $w_1$ and $w_2$}\}
    \end{equation}
    if $w_1 \neq w_2$, and $\mathsf{C}(w_1, w_2) = 0$ otherwise. \hfill $\square$
\end{definition}

%The Cantor metric is known to be an \emph{ultrametric}, meaning that it satifies the strong triangular inequality. This result will be useful for our purpose.
\begin{lemma}[See \cite{Rozenberg2002}] \label{lemma:ultra}
    The Cantor distance satisfies the strong triangular inequality. That is, for any $w_1, w_2, w_3 \in A^k$, 
    \begin{equation}
        \mathsf{C}(w_1, w_3) \leq \max\{
            \mathsf{C}(w_1, w_2), \mathsf{C}(w_2, w_3)
        \}.
    \end{equation}
\end{lemma}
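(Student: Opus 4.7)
The plan is to unfold the definition so that the Cantor distance becomes a monotone function of a single integer quantity, namely the length of the longest common prefix. Since every common prefix $c$ of $w_1$ and $w_2$ contributes $2^{-|c|}$ and the empty word is always common, the infimum in the definition is attained, and one can write $\mathsf{C}(w,w') = 2^{-\ell(w,w')}$, where $\ell(w,w')$ is the length of the longest common prefix of $w$ and $w'$ (with the convention $\ell(w,w) = k$, so that $\mathsf{C}(w,w)=0$ matches the convention $2^{-\infty}=0$ if we treat $w=w'$ as the trivial case separately). After this reformulation, the strong triangular inequality is equivalent to the purely combinatorial claim
\begin{equation*}
\ell(w_1, w_3) \;\geq\; \min\bigl\{\ell(w_1, w_2),\, \ell(w_2, w_3)\bigr\},
\end{equation*}
since the map $n \mapsto 2^{-n}$ is decreasing.

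The main step is then to prove this prefix inequality. First I would dispose of the trivial case $w_1 = w_3$, where the left-hand side of the lemma is $0$ and the inequality holds vacuously. Otherwise, set $\ell_{ij} := \ell(w_i, w_j)$ and assume without loss of generality that $\ell_{12} \leq \ell_{23}$, so that the right-hand side of the inequality above equals $\ell_{12}$. By definition of $\ell_{12}$, the first $\ell_{12}$ letters of $w_1$ and $w_2$ coincide; by definition of $\ell_{23}$ and the assumption $\ell_{12} \leq \ell_{23}$, the first $\ell_{12}$ letters of $w_2$ and $w_3$ coincide as well. Transitivity of equality then yields that the first $\ell_{12}$ letters of $w_1$ and $w_3$ coincide, so $\ell_{13} \geq \ell_{12}$, as required.

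Combining the two observations gives
\begin{equation*}
\mathsf{C}(w_1, w_3) = 2^{-\ell_{13}} \leq 2^{-\ell_{12}} = \max\{2^{-\ell_{12}}, 2^{-\ell_{23}}\} = \max\{\mathsf{C}(w_1,w_2), \mathsf{C}(w_2,w_3)\},
\end{equation*}
which is exactly the strong triangular inequality.

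I do not foresee any serious obstacle: the result is essentially a tautology once the distance is written in terms of prefix lengths. The only delicate points are notational, namely making the ``longest common prefix'' formulation rigorous despite the infimum in the original definition, and cleanly handling the degenerate cases (empty common prefix, which gives the value $1$; and $w_1 = w_3$, which gives the value $0$). Both can be dispatched in a sentence each.
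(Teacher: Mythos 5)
Your proof is correct. Note that the paper does not actually prove this lemma: it is stated as a known fact with a pointer to the symbolic-dynamics literature (\cite{Rozenberg2002}), so there is no in-paper argument to compare against. Your reduction to the combinatorial statement $\ell(w_1,w_3) \geq \min\{\ell(w_1,w_2), \ell(w_2,w_3)\}$ on longest-common-prefix lengths, followed by the observation that $n \mapsto 2^{-n}$ is decreasing, is the standard proof of the ultrametric property and is sound. The only loose end is the handling of the degenerate cases $w_1 = w_2$ or $w_2 = w_3$, where $\mathsf{C}$ is $0$ by fiat rather than $2^{-\ell}$ (for words of length $k$ the convention $\ell(w,w)=k$ gives $2^{-k} \neq 0$, so the identity $\mathsf{C} = 2^{-\ell}$ genuinely fails there); these cases are immediate anyway, since e.g.\ $w_1 = w_2$ forces $\mathsf{C}(w_1,w_3) = \mathsf{C}(w_2,w_3)$, and you rightly flag that they can be dispatched separately.
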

% \begin{proof}
%     Let $w_1 = a_1 \dots a_k$, $w_2 = b_1 \dots b_k$, and $w_3 = c_1 \dots c_k$. Suppose that $C(w_1, w_2) = 2^{-l}$ and $C(w_2, w_3) = 2^{-m}$, then it means that $a_1 \dots a_{l-1} = b_1 \dots b_{l-1}$ and $b_{1} \dots b_{m-1} = c_1 \dots c_{m-1}$. Let $n = \min\{l, m\}$, it holds that $a_1 \dots a_{n-1} = c_1 \dots c_{n-1}$, and therefore that $C(w_1, w_3) \leq 2^{-n}$. Finally, since $2^{-n} \leq \max\{2^{-l}, 2^{-m}\}$, the proof is completed.
% \end{proof}

We now define the Cantor-Kantorovich metric between two labeled Markov chains. 
\begin{definition}[Cantor-Kantorovich metric]
    Let $\Gamma_1$ and $\Gamma_2$ be two labeled Markov chains defined on the same set of labels $A$. Their \emph{Cantor-Kantorovich metric} is defined as 
    \begin{equation}
        \mathsf{CK}(\Gamma_1, \Gamma_2) = \lim_{k \to \infty} \mathsf{K}_\mathsf{C}(p_1^k, p_2^k), 
    \end{equation}
    where $\mathsf{K}_\mathsf{C}$ is the Kantorovich metric with the Cantor distance as underlying distance, and where $p_1^k$ and $p_2^k$ are the probabilities respectively induced by $\Gamma_1$ and $\Gamma_2$. \hfill $\square$
\end{definition}
We prove in the next section that this metric is well-defined, in the sense that it satisfies positivity, symmetry and triangle inequality.

For a fixed length $k$, computing the Kantorovich metric $\mathsf{K}_\mathsf{C}(p_1^k, p_2^k)$ can be achieved by solving the linear program \eqref{eq:kant_LP}. The latter can be seen as an optimal transport problem where, at each $w \in A^k$, the ``supplies'' are given by $p_1^k(w)$ and the ``demands'' by $p_2^k(w)$. For each $w_1, w_2 \in A^k$, the cost of moving $\pi^k(w_1, w_2)$ of supply mass from $w_1$ to $w_2$ amounts to $\pi^k(w_1, w_2)\mathsf{C}(w_1, w_2)$. The specific choice of the Cantor distance as the underlying distance for the metric, allows to visualise this with a tree, as shown in Figure~\ref{fig:cant_kant_tree}.

\if\useieeelayout1
\begin{figure}[ht!]
    \centering
    \includegraphics[width = 0.45\linewidth]{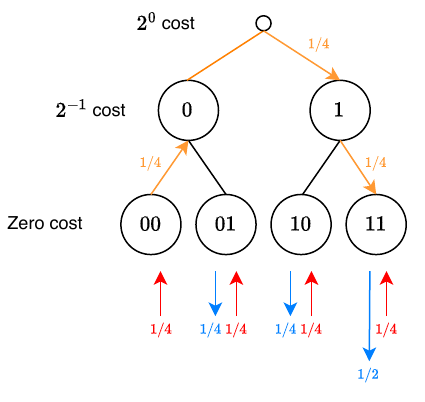}
    \caption{Illustration of the Kantorovich metric linear program \eqref{eq:kant_LP} for $A = \{0, 1\}$ and $k = 2$. The supplies are given in red, and the demands in blue. To solve this problem, one has to move $\pi^2(00, 11) = 1/4$ supply mass from 00 to 11. To do that, the mass has to travel up to the root. The corresponding Cantor distance is $\mathsf{C}(00, 11) = 2^{0}$. The total cost is therefore $\mathsf{K}_\mathsf{C}(p_1^2, p_2^2) = 1/4$.}
    \label{fig:cant_kant_tree}
\end{figure}
\else 
\begin{figure}[ht!]
    \centering
    \includegraphics[width = 0.4\linewidth]{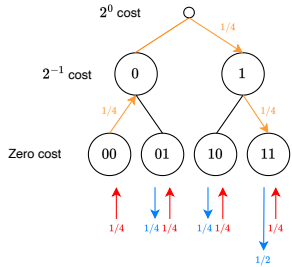}
    \caption{Illustration of the Kantorovich metric linear program \eqref{eq:kant_LP} for $A = \{0, 1\}$ and $k = 2$. The supplies are given in red, and the demands in blue. To solve this problem, one has to move $\pi^2(00, 11) = 1/4$ supply mass from 00 to 11. To do that, the mass has to travel up to the root. The corresponding Cantor distance is $C(00, 11) = 2^{0}$. The total cost is therefore $K(p_1^2, p_2^2) = 1/4$.}
    \label{fig:cant_kant_tree}
\end{figure}
\fi 

\begin{remark} \label{rem:lp}
    A na\"ive approach to compute the metric  is to use linear programming (LP) or combinatorial optimisation (CO) methods \cite{Oberman2015}. However, such methods are simply too costly to be used in practice. In the best case, the complexity is worse than quadratic, that is the number of operations is $\Omega(|A|^{2k})$. In the following section, we leverage the particular underlying Cantor distance to derive an algorithm that computes $\mathsf{K}_\mathsf{C}(p_1^k, p_2^k)$ that scales better. \hfill $\square$
\end{remark}

The Cantor-Kantorovich metric defines a metric space in which two labeled Markov chains are close if they have similar short-horizon behaviours. 
%, making it \emph{specification-free}. 
Indeed, the Cantor distance can be interpreted as a discount factor, and a large Cantor-Kantorovich metric means that the probabilities on sequences of labels differ close to the initial step of the random walks. The procedure $\textsc{Refine}(\Sigma, \lambda, \mathsf{CK}, N)$ therefore tends to choose the model with a different short-term behaviour. 

\subsection{A recursive algorithm for approximating the Cantor-Kantorovich metric}

In this section, we state Theorem~\ref{thm:recursive}, a central recursive result for computing the Cantor-Kantorovich metric. This fact will be useful for two things. First, it implies that the Cantor-Kantorovich metric satisfies positivity, symmetry and triangle inequality, and that it can be approximated, as stated in Theorem~\ref{thm:well-defined}. Second, it provides an efficient algorithm to approximate it.

We first state two lemmata that will be useful to prove Theorem~\ref{thm:recursive}. The proofs of the latter are moved in the appendices (see Appendix~\ref{app:proof_firstlemma} and Appendix~\ref{app:proof_secondlemma}). 
\begin{lemma} \label{lemma:firstlemma}
    For any $k \geq 1$, let $\pi^k$ be the solution of \eqref{eq:kant_LP}. For all $w \in A^k$, it holds that $\pi^k(w, w) = \min\{p_1^k(w), p_2^k(w)\}$.
\end{lemma}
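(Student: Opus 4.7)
The upper bound $\pi^k(w,w) \leq \min\{p_1^k(w), p_2^k(w)\}$ is immediate from the marginal constraints \eqref{eq:cons_LP_marg} together with the nonnegativity \eqref{eq:cons_LP_pos}, so the content of the lemma is the matching lower bound. The plan is to prove this by a standard transport "swap" argument that exploits two features of the Cantor ground distance: it vanishes on the diagonal, so mass sent along $(w,w)$ is free, and it satisfies the (strong) triangle inequality from Lemma~\ref{lemma:ultra}. Intuitively, if an optimal plan ever sends mass off the diagonal at $w$ while it could still have been parked there, we can push that mass back onto $(w,w)$ at no extra cost.

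Concretely, suppose towards a contradiction that there exists $w \in A^k$ with $\pi^k(w,w) < \min\{p_1^k(w), p_2^k(w)\}$. The first marginal then forces $\sum_{w_2 \neq w} \pi^k(w, w_2) > 0$ and the second forces $\sum_{w_1 \neq w} \pi^k(w_1, w) > 0$, so one can pick $w_2 \neq w$ and $w_1 \neq w$ with $\pi^k(w, w_2) > 0$ and $\pi^k(w_1, w) > 0$. Setting $\varepsilon = \min\{\pi^k(w, w_2), \pi^k(w_1, w)\} > 0$, I would define a modified plan $\tilde\pi^k$ by the 2-by-2 cycle update
\begin{equation*}
\tilde\pi^k(w,w) = \pi^k(w,w)+\varepsilon,\quad \tilde\pi^k(w_1, w_2) = \pi^k(w_1, w_2) + \varepsilon,
\end{equation*}
\begin{equation*}
\tilde\pi^k(w, w_2) = \pi^k(w, w_2) - \varepsilon,\quad \tilde\pi^k(w_1, w) = \pi^k(w_1, w) - \varepsilon,
\end{equation*}
with all other entries unchanged. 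One checks directly that $\tilde\pi^k \in \Pi(p_1^k, p_2^k)$: nonnegativity holds by the choice of $\varepsilon$, and both marginal sums are unaffected because every row and column of the affected $2 \times 2$ submatrix is corrected by $+\varepsilon - \varepsilon$. The cost changes by
\begin{equation*}
\varepsilon\bigl[\mathsf{C}(w,w) + \mathsf{C}(w_1, w_2) - \mathsf{C}(w, w_2) - \mathsf{C}(w_1, w)\bigr],
\end{equation*}
which, since $\mathsf{C}(w,w) = 0$ and $\mathsf{C}(w_1,w_2) \leq \max\{\mathsf{C}(w_1,w),\mathsf{C}(w,w_2)\} \leq \mathsf{C}(w_1,w) + \mathsf{C}(w,w_2)$ by Lemma~\ref{lemma:ultra}, is nonpositive.

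Hence $\tilde\pi^k$ is again optimal, and it has strictly larger diagonal mass at $w$. The argument can therefore be iterated (either on the same $w$ or on other $w$ where strict inequality still holds); since the diagonal entries are bounded above by $\min\{p_1^k(w), p_2^k(w)\}$ and each swap strictly increases some $\pi^k(w,w)$ by at least the size of an off-diagonal entry that becomes zero, the process terminates and produces an optimal coupling saturating the bound everywhere. The main subtlety I expect to have to address is that the cost change above need not be \emph{strictly} negative, so the argument does not by itself rule out the existence of optimal couplings violating the claim; what it shows is that one may always replace such a $\pi^k$ by one with the stated diagonal form. This is sufficient for how the lemma will be used in the recursive computation of $\mathsf{K}_\mathsf{C}(p_1^k,p_2^k)$, and it is the reading of "the solution" in the lemma statement I would adopt.
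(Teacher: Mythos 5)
Your swap construction is exactly the one the paper uses (a $2\times 2$ cycle update that moves mass back onto the diagonal at $(w,w)$), but you stop one step short of the conclusion, and that step is the actual content of the lemma. You bound the cost change using only the weak consequence $\mathsf{C}(w_1,w_2) \leq \mathsf{C}(w_1,w) + \mathsf{C}(w,w_2)$, obtain nonpositivity, and then concede that you have only shown that \emph{some} optimal coupling has the stated diagonal form, reinterpreting ``the solution'' accordingly. The missing observation is that the strong triangle inequality of Lemma~\ref{lemma:ultra}, combined with the fact that $\mathsf{C}$ is strictly positive between distinct words, makes the cost change \emph{strictly} negative: since $w_1 \neq w$ and $w_2 \neq w$, both $\mathsf{C}(w_1,w)$ and $\mathsf{C}(w,w_2)$ are positive, so
\begin{equation*}
\mathsf{C}(w_1,w_2) - \mathsf{C}(w_1,w) - \mathsf{C}(w,w_2) \;\leq\; \max\{\mathsf{C}(w_1,w), \mathsf{C}(w,w_2)\} - \mathsf{C}(w_1,w) - \mathsf{C}(w,w_2) \;=\; -\min\{\mathsf{C}(w_1,w), \mathsf{C}(w,w_2)\} \;<\; 0
\end{equation*}
(and the case $w_1 = w_2$ gives $-2\,\mathsf{C}(w,w_1) < 0$ directly). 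Hence any coupling with $\pi^k(w,w) < \min\{p_1^k(w),p_2^k(w)\}$ admits a strictly cheaper feasible perturbation and cannot be optimal. This is the contradiction the paper runs, and it proves the lemma as stated, for \emph{every} optimizer of \eqref{eq:kant_LP}; your iteration-and-termination discussion then becomes unnecessary.

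The distinction is not cosmetic for how the lemma is used downstream: the proof of Theorem~\ref{thm:recursive} fixes one optimal $\pi^{k+1}$ and applies Lemma~\ref{lemma:firstlemma} and Lemma~\ref{lemma:secondlemma} to that same coupling. Under your weaker reading you would additionally have to argue that a single optimal coupling can be chosen satisfying both properties simultaneously, which your swap procedure does not establish. With the strict-decrease version both lemmata hold for all optimizers and no such compatibility argument is needed.
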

\begin{lemma} \label{lemma:secondlemma}
    For any $k \geq 2$, let $\pi^k$ be the solution of \eqref{eq:kant_LP}. For all $w \in A^{k-1}$ such that $p_1^{k-1}(w) > p_2^{k-1}(w)$, then 
    \begin{equation}
    \begin{aligned}
        \sum_{\substack{w' \in A^{k-1} \\ w' \neq w}} \sum_{a_1, a_2 \in A} \pi^k(wa_1, w'a_2) &= p_1^{k-1}(w) - p_2^{k-1}(w) \\
        \sum_{\substack{w' \in A^{k-1} \\ w' \neq w}} \sum_{a_1, a_2 \in A} \pi^k(w'a_1, wa_2) &= 0.
    \end{aligned}
    \end{equation}
    Else if $p_1^{k-1}(w) \leq p_2^{k-1}(w)$, then 
    \begin{equation}
        \begin{aligned}
            \sum_{\substack{w' \in A^{k-1} \\ w' \neq w}} \sum_{a_1, a_2 \in A} \pi^k(wa_1, w'a_2) &= 0\\
            \sum_{\substack{w' \in A^{k-1} \\ w' \neq w}} \sum_{a_1, a_2 \in A} \pi^k(w'a_1, wa_2) &= p_2^{k-1}(w) - p_1^{k-1}(w).
        \end{aligned}
        \end{equation}
\end{lemma}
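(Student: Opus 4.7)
The plan is to reduce both identities in each case of the lemma to a single cleaner claim on a \emph{contracted} coupling, and then to establish that claim by an exchange argument in the spirit of Lemma~\ref{lemma:firstlemma}, but operating one level up in the tree.

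First I would introduce the projection
\[
\tilde{\pi}(w, w') = \sum_{a_1, a_2 \in A} \pi^k(w a_1, w' a_2).
\]
The marginal constraints \eqref{eq:cons_LP_marg} applied to $\pi^k$, together with $p_i^{k-1}(w) = \sum_{a \in A} p_i^k(w a)$, readily show that $\tilde{\pi}$ is a coupling of $p_1^{k-1}$ and $p_2^{k-1}$. Consequently, for every $w \in A^{k-1}$,
\begin{align*}
\sum_{w' \neq w} \sum_{a_1, a_2 \in A} \pi^k(w a_1, w' a_2) &= p_1^{k-1}(w) - \tilde{\pi}(w,w), \\
\sum_{w' \neq w} \sum_{a_1, a_2 \in A} \pi^k(w' a_1, w a_2) &= p_2^{k-1}(w) - \tilde{\pi}(w,w).
\end{align*}
Plugging these expressions into both cases of the lemma, the four claimed equations collapse to the single identity $\tilde{\pi}(w,w) = \min\{p_1^{k-1}(w), p_2^{k-1}(w)\}$ for every $w \in A^{k-1}$, which is what remains to be shown.

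The inequality $\tilde{\pi}(w,w) \leq \min\{p_1^{k-1}(w), p_2^{k-1}(w)\}$ is immediate since $\tilde{\pi}$ is a coupling. For the reverse, I would argue by contradiction: if $\tilde{\pi}(w,w) < \min\{p_1^{k-1}(w), p_2^{k-1}(w)\}$ for some $w$, then the marginal constraints force the existence of $w_1', w_2' \in A^{k-1} \setminus \{w\}$ and letters $a_1, a_2, a_1', a_2' \in A$ with $\pi^k(w a_1, w_2' a_2') > 0$ and $\pi^k(w_1' a_1', w a_2) > 0$. For a sufficiently small $\delta > 0$, one constructs a perturbed coupling $\pi^k_\delta$ that agrees with $\pi^k$ except on these four entries, adding $\delta$ to $\pi^k(w a_1, w a_2)$ and to $\pi^k(w_1' a_1', w_2' a_2')$ while subtracting $\delta$ from $\pi^k(w a_1, w_2' a_2')$ and from $\pi^k(w_1' a_1', w a_2)$. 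Both marginals are preserved and positivity is retained, so $\pi^k_\delta \in \Pi(p_1^k, p_2^k)$.

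The main obstacle is to show that this rerouting strictly decreases the Kantorovich cost, thereby contradicting the optimality of $\pi^k$. The change in cost per unit $\delta$ equals
\[
\Delta = \mathsf{C}(w a_1, w a_2) + \mathsf{C}(w_1' a_1', w_2' a_2') - \mathsf{C}(w a_1, w_2' a_2') - \mathsf{C}(w_1' a_1', w a_2).
\]
Since $w$ is a common prefix of length $k-1$ of both $w a_1$ and $w a_2$, one has $\mathsf{C}(w a_1, w a_2) \leq 2^{-(k-1)}$, whereas $w \neq w_1'$ and $w \neq w_2'$ bound the longest common prefixes in the two cross-terms by $k-2$, yielding $\mathsf{C}(w a_1, w_2' a_2') \geq 2^{-(k-2)}$ and $\mathsf{C}(w_1' a_1', w a_2) \geq 2^{-(k-2)}$. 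The only potentially dangerous term is $\mathsf{C}(w_1' a_1', w_2' a_2')$, which I would upper-bound via the strong triangle inequality of Lemma~\ref{lemma:ultra} by pivoting through either $w a_1$ or $w a_2$. The ultrametric ``isoceles'' property (whenever one of three pairwise distances is strictly smaller, the other two coincide) then collapses two terms of $\Delta$ through $\mathsf{C}(w a_1, w_2' a_2') = \mathsf{C}(w a_2, w_2' a_2')$ and $\mathsf{C}(w_1' a_1', w a_1) = \mathsf{C}(w_1' a_1', w a_2)$. A short case split on whether $\mathsf{C}(w_1' a_1', w a_2)$ or $\mathsf{C}(w a_2, w_2' a_2')$ is the larger of the two leaves, in either subcase, $\Delta \leq 2^{-(k-1)} - 2^{-(k-2)} = -2^{-(k-1)} < 0$, yielding the required contradiction and hence the lemma.
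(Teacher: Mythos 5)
Your proof is correct and follows essentially the same route as the paper's: both rest on a local rerouting of mass in the optimal coupling $\pi^k$ whose cost strictly decreases because the two cross-block terms are at least $2^{-(k-2)}$ while the replacement terms are controlled by the strong triangle inequality of Lemma~\ref{lemma:ultra}, contradicting optimality. Your reformulation via the contracted coupling $\tilde{\pi}$ packages all four identities (and both cases) into the single claim $\tilde{\pi}(w,w)=\min\{p_1^{k-1}(w),p_2^{k-1}(w)\}$, and has the minor advantage of not needing the two off-diagonal witnesses to lie in distinct blocks $w_1'\neq w_2'$ --- a condition the paper's proof imposes on its $w'$ and $w''$ without justification and which your argument shows is unnecessary.
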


%We are now able to state the main theorem of this section. For the sake of conciseness, a sketch of the proof is provided, and the full proof can be found in Appendix~\ref{app:proof_theorem_recursive}.
\begin{theorem} \label{thm:recursive}
    For any $k \geq 1$, let $\pi^k$ be the solution of \eqref{eq:kant_LP}. Then it holds that
    \begin{equation} \label{eq:iterative_prop}
    \begin{aligned}
        &\mathsf{K}_\mathsf{C}(p_1^{k+1}, p_2^{k+1}) = \mathsf{K}_\mathsf{C}(p_1^k, p_2^k) \\
        &\quad + 2^{-k} 
        \sum_{w \in A^k} \left( r^k(w) - \sum_{a \in A} r^{k+1}(wa) \right), 
    \end{aligned}
    \end{equation}
    where $r^k(w) = \min\{p_1^k(w), p_2^k(w)\}$ for any $w \in A^k$.
\end{theorem}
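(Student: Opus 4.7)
\sproof The plan is to establish \eqref{eq:iterative_prop} by squeezing: derive a lower bound on $\mathsf{K}_\mathsf{C}(p_1^{k+1}, p_2^{k+1})$ from an optimal coupling $\pi^{k+1}$ together with Lemmas~\ref{lemma:firstlemma}--\ref{lemma:secondlemma}, then match it from above by explicitly lifting an optimal level-$k$ coupling. First I would observe that the Cantor distance behaves cleanly under appending a letter: for all $w_1, w_2 \in A^k$ and $a_1, a_2 \in A$, one has $\mathsf{C}(w_1 a_1, w_2 a_2) = \mathsf{C}(w_1, w_2)$ whenever $w_1 \neq w_2$, and if $w_1 = w_2$ this distance equals $0$ when $a_1 = a_2$ and $2^{-k}$ otherwise. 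Grouping the cost of $\pi^{k+1}$ accordingly, and introducing the marginalization $\tilde{\pi}^k(w_1, w_2) := \sum_{a_1, a_2 \in A} \pi^{k+1}(w_1 a_1, w_2 a_2)$, which is itself a valid coupling of $p_1^k$ and $p_2^k$, yields
\begin{align*}
\mathsf{K}_\mathsf{C}(p_1^{k+1}, p_2^{k+1}) &= \sum_{w_1 \neq w_2 \in A^k} \mathsf{C}(w_1, w_2)\, \tilde{\pi}^k(w_1, w_2) \\
&\quad + 2^{-k} \sum_{w \in A^k} \sum_{a_1 \neq a_2 \in A} \pi^{k+1}(w a_1, w a_2).
\end{align*}

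To process the second sum, I would apply Lemma~\ref{lemma:firstlemma} at level $k+1$ to obtain $\pi^{k+1}(wa, wa) = r^{k+1}(wa)$, and Lemma~\ref{lemma:secondlemma} at level $k+1$ (with $w \in A^k$) to evaluate the total mass transported between different length-$k$ prefixes. Combining the latter with the side-$1$ marginal constraint $\sum_{w' \in A^k} \tilde{\pi}^k(w, w') = p_1^k(w)$ forces $\tilde{\pi}^k(w, w) = r^k(w)$ in both cases of the lemma, whence $\sum_{a_1 \neq a_2} \pi^{k+1}(w a_1, w a_2) = r^k(w) - \sum_{a \in A} r^{k+1}(wa)$. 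The first sum is bounded below by $\mathsf{K}_\mathsf{C}(p_1^k, p_2^k)$ since $\tilde{\pi}^k$ is a valid coupling, which yields the ``$\geq$'' direction of \eqref{eq:iterative_prop}.

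For the matching upper bound, I would lift an optimal $\pi^k$ to a coupling $\hat{\pi}^{k+1}$ saturating each term. The construction places $r^{k+1}(wa)$ on every diagonal entry $(wa, wa)$; for the residuals $\tilde{p}_j(wa) := p_j^{k+1}(wa) - r^{k+1}(wa)$, it splits each into an ``absorbed'' part $\alpha_j(wa)$ summing in $a$ to $\rho(w) := r^k(w) - \sum_a r^{k+1}(wa)$ and a ``leaving'' part $\ell_j(wa)$ summing in $a$ to $p_j^k(w) - r^k(w)$; the $\alpha_j$'s populate the off-diagonal of each $(w, w)$-block, while the $\ell_j$'s together with the transport plan $\pi^k(w_1, w_2)$ for $w_1 \neq w_2$ populate the between-block entries. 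The pointwise identity $\tilde{p}_1(wa)\,\tilde{p}_2(wa) = 0$, valid because the two residuals are the positive and negative parts of $p_1^{k+1}(wa) - p_2^{k+1}(wa)$, ensures that $\alpha_1(w\cdot)$ and $\alpha_2(w\cdot)$ have disjoint supports in $A$, so the off-diagonal sub-coupling within each $(w, w)$-block is always feasible; a routine marginal check then totals the cost to exactly the right-hand side of \eqref{eq:iterative_prop}.

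The main obstacle lies in this upper-bound construction: one must enforce simultaneously the diagonal saturation of Lemma~\ref{lemma:firstlemma}, the sparse between-prefix flow pattern of Lemma~\ref{lemma:secondlemma}, and all marginal constraints at level $k+1$, while keeping the within-block transport off the diagonal. Once both bounds are in place, \eqref{eq:iterative_prop} follows by equality.
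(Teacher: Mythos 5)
Your proposal is correct and follows essentially the same route as the paper: the same split of the level-$(k+1)$ cost by prefix agreement, the same marginalization argument combined with Lemmas~\ref{lemma:firstlemma} and~\ref{lemma:secondlemma} for the lower bound, and the same strategy of lifting an optimal $\pi^k$ for the matching upper bound. Your upper-bound construction via positive/negative residual parts with the disjoint-support observation is a cleaner packaging of the paper's greedy procedure, but it is the same underlying coupling.
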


\begin{proof}
For the sake of conciseness, we note $\mathsf{K}_{\mathsf{C}}^k = \mathsf{K}_{\mathsf{C}}(p_1^k, p_2^k)$. This proof is divided into two parts. First, we prove that the right-hand side of \eqref{eq:iterative_prop} is a lower bound for $\mathsf{K}_{\mathsf{C}}^{k+1}$. Second, we prove that it is also an upper bound.

First we note that $\mathsf{K}_{\mathsf{C}}^{k+1}$ is equal to
\begin{equation}
\begin{aligned}
    &\sum_{w_1, w_2 \in A^k} \sum_{a_1, a_2 \in A} \mathsf{C}(w_1a_1, w_2a_2) \pi^{k+1}(w_1a_1, w_2a_2) \\
    =& \sum_{\substack{w_1, w_2 \in A^k\\w_1 \neq w_2}} \mathsf{C}(w_1, w_2) \sum_{a_1, a_2 \in A} \pi^{k+1}(w_1a_1, w_2a_2) \\
    &+ 2^{-(k+1)} \sum_{w \in A^k} \sum_{\substack{a_1, a_2 \in A\\a_1 \neq a_2}} \pi^{k+1}(wa_1, wa_2) \\
    &:= C_1 + C_2.
\end{aligned}
\end{equation}
In the first part of this proof, we show that the two following expressions hold:
\begin{equation} \label{eq:first_sketch}
    C_1 \geq \mathsf{K}_{\mathsf{C}}^k, 
\end{equation}
\begin{equation} \label{eq:second_sketch}
    C_2 = 2^{-k} \sum_{w \in A^k} \left( r^k(w) - \sum_{a \in A} r^{k+1}(wa) \right).
\end{equation}
In the first instance, we show that \eqref{eq:first_sketch} holds. To do this, let $\mu^k : A^k \times A^k \to [0, 1]$ be defined as 
$
    \mu^k(w_1, w_2) = \sum_{a_1, a_2 \in A} \pi^{k+1}(w_1a_1, w_2a_2).
$
We show that $\mu^k$ satisfies the constraints \eqref{eq:cons_LP_pos} and \eqref{eq:cons_LP_marg}. Indeed, $\mu^k(w_1, w_2) \geq 0$, and
\begin{equation}
\begin{aligned}
    \sum_{w_2 \in A^k} \mu(w_1, w_2) &= \sum_{w_2 \in A^k} \sum_{a_1, a_2 \in A} \pi^{k+1}(w_1a_1, w_2a_2)\\
    &= \sum_{a_1} p_1^{k+1}(w_1a_1) = p_1^k(w_1), 
\end{aligned}
\end{equation}
and similarly for the second condition in \eqref{eq:cons_LP_marg}. This implies that $\mu^k$ is a coupling, thereby a feasible solution of \eqref{eq:kant_LP}. This yields 
\begin{equation}
    \mathsf{K}_{\mathsf{C}}^k \leq \sum_{a_1, a_2 \in A} \mathsf{C}(w_1, w_2) \mu^k(w_1, w_2) = C_1.
\end{equation}
As a second step, we show that \eqref{eq:second_sketch} holds. More precisely, we show that, for all $w \in A^k$, the following holds:
\begin{equation}
    \sum_{\substack{a_1, a_2 \in A\\a_1 \neq a_2}} \pi^{k+1}(wa_1, wa_2) = r^k(w) - \sum_{a \in A} r^{k+1}(wa), 
\end{equation}
which implies that 
\begin{equation}
    C_2 = 2^{-(k+1)} \sum_{w \in A^k} \left[ r^k(w) - \sum_{a \in A} r^{k+1}(wa) \right].
\end{equation}
We prove the claim. Assume without loss of generality that $w$ is such that $p_1^k(w) > p_2^k(w)$, then 
\begin{equation}
\begin{aligned}
    &\sum_{a_1, a_2 \in A} \pi^{k+1}(wa_1, wa_2) \\
    & = \sum_{a_1 \in A} 
        \sum_{w' \in A^k} \sum_{a_2 \in A} 
        \pi^{k+1}(wa_1, w'a_2) \\
    & \quad \quad \quad \quad - \sum_{\substack{w' \in A^k\\w' \neq w}}
    \sum_{a_2 \in A} 
    \pi^{k+1}(wa_1, w'a_2) \\
    &= \sum_{a_1 \in A} p_1^k(wa_1) - \sum_{\substack{w' \in A^k\\w' \neq w}} \sum_{a_1, a_2 \in A} 
    \pi^{k+1}(wa_1, w'a_2) \\
\end{aligned}  
\end{equation}
Following Lemma~\ref{lemma:secondlemma}, this is equal to 
$
    p_1^k(w) - \left(p_1^k(w) - p_2^k(w)\right) = r^k(w).
$
And the following holds: 
\begin{equation}
\begin{aligned}
    &\sum_{\substack{a_1, a_2 \in A\\a_1 \neq a_2}} \pi^{k+1}(wa_1, wa_2) \\
    =& \sum_{\substack{a_1 \neq a_2 \in A\\a_1 \neq a_2}} \pi^{k+1}(wa_1, wa_2) - \sum_{a \in A} \pi^{k+1}(wa_1, wa_2) \\
    =&\, r^k(w) - \sum_{a \in A} r^{k+1}(wa)
\end{aligned}
\end{equation}
by Lemma~\ref{lemma:firstlemma}. This concludes that the right-hand side of \eqref{eq:iterative_prop} is a lower bound for $\mathsf{K}_{\mathsf{C}}^{k+1}$. 

We now move to the second part of this proof. To provide an upper bound, we will show that we can construct a feasible $k+1$ solution feasible $\mu^{k+1}$ such that 
\begin{equation}
\begin{aligned}
    &\sum_{w_1, w_2 \in A^{k+1}} \mathsf{C}(w_1, w_2)\mu^{k+1}(w_1, w_2) \\
    &\quad \quad \quad = \mathsf{K}_{\mathsf{C}}^k + \sum_{w \in A^k} \left[
        r^k(w) - \sum_{a \in A} r^{k+1}(wa)
    \right].
\end{aligned}
\end{equation}
Consider $\pi^k$, an optimal solution at step $k$. We will construct $\mu^{k+1}$ in the following greedy way. Initialise $\mu^{k+1}$ with only zero elements, and for all $w \in A^{k}$, $a \in A$, we initialise $\delta(wa) = 0$. We start by updating the blocks $\mu^{k+1}(w_1a_1, w_2a_2)$ where $w_1 \neq w_2$. For all $w$ such that $p_1^{k}(w) > p_2^{k}(w)$, for all $a \in A$ such that $p_1^{k+1}(wa) > p_2^{k+1}(wa)$, do the following.
\begin{enumerate}
\item Let $\tilde{\delta}(wa) = p_1^{k+1}(wa) - p_2^{k+1}(wa)$.\\
If $\sum_{a' \neq a} \delta(wa') + \tilde{\delta}(wa) > p_1^{k}(w) - p_2^{k}(w)$, let $\delta(wa) = (p_1^{k}(w) - p_2^{k}(w)) - \sum_{a' \neq a} \delta(wa')$.\\
Else let $\delta(wa) = \tilde{\delta}(wa)$.
\item \label{item:current} Find a $w' \neq w$ such that
\begin{equation}
    \pi^k(w, w') > \sum_{a_1, a_2 \in A} \mu^{k+1}(wa_1, w'a_2). 
\end{equation}
Now, for any $a' \in A$, let 
\begin{equation}
\begin{aligned}
    &\psi(a') = p_2^{k+1}(w'a') - p_1^{k+1}(w'a')  \\
    & - \sum_{\substack{w'' \in A^k\\w'' \neq w'}} \sum_{a_1 \in A} \mu^{k+1}(w''a_1, w'a')
\end{aligned}
\end{equation}
Then, find $a' \in A$ such that $\psi(a') > 0$. \\
Now, if $\delta(wa) > \psi(a')$, then: 
\begin{itemize}
    \item Update $\mu(wa, w'a') \gets \psi(a')$
    \item Update $\delta(wa) \gets \delta(wa) - \psi(a')$
    \item Return to \ref{item:current}.
\end{itemize}
Else, update $\mu(wa, w'a') \gets \delta(wa)$.
\end{enumerate}
We claim that, in the procedure above, there always exists such a $w'$ for a given $wa$. Otherwise, 
$
    \sum_{w' \neq w}\pi^k(w, w') < p_1^k(w) - p_2^k(w),  
$
which is impossible by Lemma~\ref{lemma:firstlemma}. Also, we claim that there also always exists such $a'$ for a given $wa$ and $w'$. Otherwise, for all $a' \in A$, 
\begin{equation}
\begin{aligned}
    &\sum_{a' \in A} \sum_{\substack{w'' \in A^k\\w'' \neq w'}} \sum_{a_1 \in A} \mu^{k+1}(w''a_1, w'a') \\
    &\quad = \sum_{a' \in A} p_2^{k+1}(w'a') - p_1^{k+1}(w'a'), 
\end{aligned}
\end{equation}
which means by construction that 
\begin{equation}
\begin{aligned}
    &\sum_{a' \in A} \sum_{\substack{w'' \in A^k\\w'' \neq w'}} \sum_{a_1 \in A} \pi^{k+1}(w''a_1, w'a') \\
    &\quad > \sum_{a' \in A} p_2^{k+1}(w'a') - p_1^{k+1}(w'a'), 
\end{aligned}
\end{equation}
which is $p_2^{k}(w) - p_1^{k}(w) > p_2^{k}(w) - p_1^{k}(w)$ by Lemma~\ref{lemma:secondlemma}. Moreover, by construction we have that, for all $w \neq w'$,
\begin{equation} \label{eq:non_diag_blocks}
    \pi^{k}(w, w') = \sum_{a_1, a_2 \in A^k} \mu^{k+1}(wa_1, w'a_2).
\end{equation}

Now, we construct the diagonal blocks $\mu^{k+1}(wa_1, wa_2)$. For each $w \in A^k$ and $a \in A$, let 
\begin{equation} \label{eq:fake_constraints}
\begin{gathered}
    \tilde{p}_1^{k+1}(wa) = p_1^{k+1}(wa) - \sum_{w' \neq w} \sum_{a \in A} \mu^{k+1}(wa, w'a'),\\
    \tilde{p}_2^{k+1}(wa) = {p}_2^{k+1}(wa) - \sum_{w' \neq w} \sum_{a \in A} \mu^{k+1}(w'a', wa).
\end{gathered}
\end{equation}
Now, for a given $w$, let us solve the following balanced optimal transport problem: 
\begin{equation}
\begin{aligned}
    &\inf_{\mu^{k+1}} 2^{-k}\sum_{\substack{a_1, a_2 \in A\\a_1 \neq a_2}} \mu^{k+1}(wa_1, wa_2) \\
    \textrm{s.t. }&\forall a_1 \in A: \, \sum_{a_2} \mu^{k+1}(wa_1, wa_2) = \tilde{p}_1^{k+1}(wa_1), \\
    &\forall a_2 \in A: \, \sum_{a_1} \mu^{k+1}(wa_1, wa_2) = \tilde{p}_2^{k+1}(wa_2).
\end{aligned}
\end{equation}
Following the definition of $\tilde{p}$ and $\tilde{q}$, this is a balanced optimal transport whose trivial solution is given by 
\begin{equation} \label{eq:trivial_balanced}
    2^{-k} \left(r^k(w) - \sum_{a \in A} r^{k+1}(wa)\right).
\end{equation}

Now we conclude the proof. By \eqref{eq:non_diag_blocks} and \eqref{eq:fake_constraints}, $\mu^{k+1}$ is a coupling of $p_1^{k+1}$ and $p_2^{k+1}$. Indeed it is positive, and for any $w_1 \in A^k$ and $a_1 \in A$, 
\begin{equation}
\begin{aligned}
    &\sum_{w_2 \in A^k} \sum_{a_2 \in A} \mu^{k+1}(w_1a_1, w_2a_2) \\
    = &\sum_{a_2 \in A} \mu^{k+1}(w_1a_1, w_1a_2) + \sum_{\substack{w_2 \in A^k\\w_2 \neq w_1}} \sum_{a_2 \in A} \mu^{k+1}(w_1a_1, w_2a_2) \\
    = \, &\tilde{p}_1^{k+1}(w_1a_1) + \left(p_1^{k+1}(w_1a_1) - \tilde{p}_1^{k+1}(w_1a_1)\right)= p_1^{k+1}(w_1a_1), 
\end{aligned}
\end{equation}
and similarly for $p_2^{k+1}$. Finally, 
\begin{equation}
\begin{aligned}
    &\sum_{w_1, w_2 \in A^k} \sum_{a_1, a_2 \in A} \mathsf{C}(w_1a_1, w_2a_2) \mu^{k+1}(w_1a_1, w_2a_2) \\
    =& \sum_{w_1 \neq w_2} \mathsf{C}(w_1, w_2) \sum_{a_1, a_2} \mu^{k+1}(w_1a_1, w_2a_2) \\
    +& 2^{-k} \sum_{w} \sum_{\substack{{a_1, a_2}\\a_1 \neq a_2}} \mu^{k+1}(wa_1, wa_2).
\end{aligned}
\end{equation}
By \eqref{eq:non_diag_blocks}, the first term is $\mathsf{K}_{\mathsf{C}}^k$, and by \eqref{eq:trivial_balanced}, the second term is 
$
    \sum_{w \in A^k} \left[r^k(w) - \sum_{a \in A} r^{k+1}(wa) \right].
$
This provides an upper bound on $\mathsf{K}_{\mathsf{C}}^{k+1}$, and the proof is completed. 
\end{proof}

% \begin{sproof}
%     We divide the proof in two parts. The first part of the proof consists in showing that $\mathsf{K}_\mathsf{C}(p_1^{k+1}, p_2^{k+1})$ is an upper bound for the right hand side of \eqref{eq:iterative_prop}. More precisely, we decompose
%     \begin{equation}
%     \begin{aligned}
%         &\mathsf{K}_\mathsf{C}(p_1^{k+1}, p_2^{k+1}) \\
%         =& \, \sum_{\substack{w_1, w_2 \in A^k\\w_1 \neq w_2}} \mathsf{C}(w_1, w_2) \sum_{a_1, a_2 \in A} \pi^{k+1}(w_1a_1, w_2a_2) \\
%         +& \, 2^{-k} \sum_{w \in A^k} \sum_{\substack{a_1, a_2 \in A\\a_1 \neq a_2}} \pi^{k+1}(wa_1, wa_2) \\
%         :=& \, C_1 + C_2, 
%     \end{aligned}
%     \end{equation}
%     and we show that
%     \begin{align}
%         C_1 &\geq \mathsf{K}_{\mathsf{C}}(p_1^k, p_2^k), \label{eq:proof_C1} \\
%         C_2 &= 2^{-k} 
%         \sum_{w \in A^k} \left( r^k(w) - \sum_{a \in A} r^{k+1}(wa) \right). \label{eq:proof_C2}
%     \end{align}
%     To prove \eqref{eq:proof_C1}, we construct a feasible coupling $\mu^k : A^k \times A^k \to [0, 1]$ that yields $C_1$ as solution, and \eqref{eq:proof_C2} holds by Lemma~\ref{lemma:firstlemma}.

%     The second part of the proof consists in showing that $\mathsf{K}_\mathsf{C}(p_1^{k+1}, p_2^{k+1})$ is also a lower bound for the right hand side of \eqref{eq:iterative_prop}. To do that, we construct a feasible solution $\mu^{k+1} : A^{k+1} \times A^{k+1} \to [0, 1]$ that yields the right hand side of \eqref{eq:iterative_prop} by Lemma~\ref{lemma:secondlemma}.  \hfill \if\useieeelayout1$\blacksquare$\else$\square$\fi
% \end{sproof}

Theorem~\ref{thm:recursive} is central for proving Theorem~\ref{thm:well-defined}, stated next. The latter first states that $\mathsf{CK}$ satisfies positivity, symmetry and triangle inequality. This also implies that the Cantor-Kantorovich metric can be approximated with $\mathsf{K}_\mathsf{C}(p_1^k, p_2^k)$ for a finite $k$. 
\begin{theorem} \label{thm:well-defined}
    The function $\mathsf{CK}$ satisfies positivity, symmetry and triangle inequality. Moreover, for any $k \geq 1$, 
    \begin{equation} \label{eq:approximate}
        0 \leq \mathsf{CK}(\Gamma_1, \Gamma_2) - \mathsf{K}_{\mathsf{C}}(p_1^k, p_2^k) \leq 2^{1-k} \sum_{w \in A^k} r^{k}(w). 
    \end{equation}
\end{theorem}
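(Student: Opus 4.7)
The plan is to use Theorem~\ref{thm:recursive} as the workhorse, and to treat the three metric properties as straightforward consequences of the same properties at finite $k$. First I would record the key monotonicity fact that, for every $k$ and every $w \in A^k$,
\begin{equation*}
\sum_{a \in A} r^{k+1}(wa) = \sum_{a \in A} \min\{p_1^{k+1}(wa), p_2^{k+1}(wa)\} \leq \min\{p_1^k(w), p_2^k(w)\} = r^k(w),
\end{equation*}
since $\sum_i \min\{\alpha_i,\beta_i\} \leq \min\{\sum_i \alpha_i, \sum_i \beta_i\}$ and the conditional probabilities sum to the marginal. Combined with Theorem~\ref{thm:recursive}, this says that every increment $\mathsf{K}_\mathsf{C}(p_1^{k+1},p_2^{k+1})-\mathsf{K}_\mathsf{C}(p_1^{k},p_2^{k})$ is nonnegative, so $\{\mathsf{K}_\mathsf{C}(p_1^k,p_2^k)\}_{k\geq1}$ is nondecreasing.

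Next, writing $R_j := \sum_{w\in A^j} r^j(w)$, the same inequality shows that $R_j$ is nonincreasing in $j$ (and trivially $R_j\in[0,1]$). Telescoping the recursion of Theorem~\ref{thm:recursive} over $j=k,\dots,K-1$ then gives
\begin{equation*}
\mathsf{K}_\mathsf{C}(p_1^K,p_2^K) - \mathsf{K}_\mathsf{C}(p_1^k,p_2^k) \;=\; \sum_{j=k}^{K-1} 2^{-j}\bigl(R_j - R_{j+1}\bigr) \;\leq\; R_k \sum_{j=k}^{K-1} 2^{-j} \;\leq\; 2^{1-k} R_k,
\end{equation*}
where the crucial step uses $R_j\leq R_k$ for $j\geq k$ to pull $R_k$ out and then bounds the geometric sum by $2^{1-k}$. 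Because the right-hand side is uniformly bounded, the monotone sequence converges, so $\mathsf{CK}(\Gamma_1,\Gamma_2)$ is a well-defined real number. Letting $K\to\infty$ yields the upper bound in \eqref{eq:approximate}, while monotonicity gives the lower bound.

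Finally, the three metric axioms are inherited termwise. Each $\mathsf{K}_\mathsf{C}(p_1^k,p_2^k)\geq 0$, giving positivity in the limit. For any $k$, $\mathsf{K}_\mathsf{C}(p_1^k,p_2^k)=\mathsf{K}_\mathsf{C}(p_2^k,p_1^k)$ (the Cantor cost is symmetric and the coupling constraints swap), giving symmetry in the limit. For any third chain $\Gamma_3$ with induced distributions $p_3^k$, the Kantorovich metric satisfies $\mathsf{K}_\mathsf{C}(p_1^k,p_3^k)\leq \mathsf{K}_\mathsf{C}(p_1^k,p_2^k)+\mathsf{K}_\mathsf{C}(p_2^k,p_3^k)$ at every $k$ (a standard consequence of $\mathsf{C}$ being a metric, via gluing couplings), and since all three sequences converge, the triangle inequality passes to the limit.

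The main obstacle is the telescoping bound: one must notice that the inequality $\sum_a r^{k+1}(wa)\leq r^k(w)$ makes $R_j$ monotone, and that this monotonicity, not any cancellation in the telescoped sum, is what permits the geometric-series estimate producing the prefactor $2^{1-k}$. Everything else is routine limit bookkeeping on top of Theorem~\ref{thm:recursive}.
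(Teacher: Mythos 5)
Your proposal is correct and follows essentially the same route as the paper: both use Theorem~\ref{thm:recursive} together with the law-of-total-probability inequality $\sum_{a}r^{k+1}(wa)\leq r^k(w)$ to show that the sequence $\mathsf{K}_\mathsf{C}(p_1^k,p_2^k)$ is nondecreasing and that $S_k=\sum_w r^k(w)$ is nonincreasing, then bound the telescoped tail by the geometric series to get the $2^{1-k}S_k$ estimate and invoke monotone convergence. The only difference is that you spell out the termwise inheritance of positivity, symmetry and the triangle inequality (via gluing of couplings), a point the paper's proof leaves implicit by reducing well-definedness to convergence of the sequence.
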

\begin{proof}
    For the sake of conciseness, let $\mathsf{K}_{\mathsf{C}}^k := \mathsf{K}_{\mathsf{C}}(p_1^k, p_2^k)$, and let $S_k := \sum_{w \in A^k} r^{k}(w)$. Proving that the metric is well-defined reduces to proving that the sequence $(\mathsf{K}_{\mathsf{C}}^k)_{k \geq 1}$ converges. Theorem~\ref{thm:recursive} implies that 
    \begin{equation} \label{eq:KC_exp}
        \mathsf{K}_{\mathsf{C}}^k = (1 - S_1) + \sum_{i = 1}^{k - 1} 2^{-i}(S_i - S_{i+1}).
    \end{equation}
    
    Now let us focus on $S_i - S_{i+1}$. By the law of total probability, for all $i \geq 1$, we have that 
    $
        0 \leq r^i(w) - \sum_{a \in A} r^{i+1}(wa) \leq r^i(w), 
    $
    and therefore, by summing over all the words $w \in A^i$, we have that
    \begin{equation} \label{eq:Si-Si+1}
        0 \leq S_i - S_{i+1} \leq S_i \leq 1.
    \end{equation}
    
    One can see that the maximal value of \eqref{eq:KC_exp} is attained when $S_1 = 0$, which implies that $S_i = 0$ for $i = 2, 3, \dots$, and therefore the sequence $(\mathsf{K}_{\mathsf{C}}^k)_{k \geq 1}$ is upper bounded by $1$. Moreover, one can assert that the sequence is also monotone, following \eqref{eq:Si-Si+1}. By the monotone convergence theorem, the limit therefore exists and is equal to $
        \mathsf{CK}(\Gamma_1, \Gamma_2) = \lim_{k \to \infty} \mathsf{K}_{\mathsf{C}}^k = \sup_{k \geq 1} \mathsf{K}_{\mathsf{C}}^k.$

    Now, it remains to show that \eqref{eq:approximate} holds. Inequalities \eqref{eq:Si-Si+1} implies that
    \begin{equation}
    \begin{aligned}
        &\mathsf{CK}(\Gamma_1, \Gamma_2) - \mathsf{K}_{\mathsf{C}}^p %&= \left[(1 - S_1) + \sum_{i = 1}^{\infty} 2^{-i}(S_i - S_{i+1})\right]\\
        %&- \left[(1 - S_1) + \sum_{i = 1}^{p - 1} 2^{-i}(S_i - S_{i+1})\right]\\
        = \sum_{i = p}^\infty 2^{-i} (S_i - S_{i+1}) \\
        &\quad \quad \leq \sum_{i = p}^\infty 2^{-i} S_i \leq S_p \sum_{i = p}^\infty 2^{-i} = 2^{1-p} S_p, 
    \end{aligned} 
    \end{equation}
    which is the claim.
\end{proof}

\begin{remark} \label{rem:approximation}
    Following Theorem~\ref{thm:well-defined}, one can compute a-priori the number of iterations needed to reach an accuracy $\varepsilon \in (0, 1)$. Indeed the right-hand side of \eqref{eq:approximate} is upper bounded by $2^{1-k}$, and therefore it suffices to take $k = \lceil \log_2(\varepsilon^{-1}) \rceil + 1$ to guarantee an $\varepsilon$-accurate solution. \hfill $\square$
\end{remark}

Based on the above results, let us introduce Algorithm~\ref{alg:kantorovich}, which takes as input two labeled Markov chains $\Gamma_1$ and $\Gamma_2$, and a desired accuracy $\varepsilon$, and computes the metric of interest. This algorithm relies on the recursive algorithm \textsc{CK-rec}, which is a natural implementation of Theorem~\ref{thm:recursive} that is described in Algorithm~\ref{alg:kantorovich_rec}. 

\begin{algorithm}[ht!]
    \caption{$\textsc{CK-rec}(\textsc{acc}, l, w, k)$}\label{alg:kantorovich_rec}
    \begin{algorithmic}[1]
        \STATE $r = \min\{ p_1^l(w), p_2^l(w) \} $
        \IF{$r = 0$}
            \STATE Stop
        \ENDIF
        \IF{$l = k$}
            \STATE $\textsc{acc} \gets \textsc{acc} + 2^{1-k}r$
            \STATE Stop
        \ENDIF
        \STATE $\textsc{acc} \gets \textsc{acc} + 2^{-k}r$
        \FOR{$a \in A$}
            \STATE $\textsc{CK-rec}(\textsc{acc}, l+1, wa, k)$
        \ENDFOR
    \end{algorithmic}
 \end{algorithm}

\begin{algorithm}[ht!]
    \caption{$\textsc{Cantor-Kantorovich}(\Gamma_1, \Gamma, \varepsilon)$}\label{alg:kantorovich}
    \begin{algorithmic}[1]
        \STATE For $l = 1, \dots, k$, compute $p_1^l(w)$ and $p_2^l(w)$ \hfill {\footnotesize See Remark~\ref{rem:prob_complexity}}
        \STATE $k \gets \lceil \log_2(\varepsilon^{-1}) \rceil + 1$ \hfill {\footnotesize See Remark~\ref{rem:approximation}}
        \STATE $\textsc{acc} \gets 0$
        \FOR{$a \in A$}
            \STATE $\textsc{CK-rec}(\textsc{acc}, 1, a, k)$
        \ENDFOR
        \RETURN $\textsc{acc}$
    \end{algorithmic}
\end{algorithm}

Algorithm~\ref{alg:kantorovich} terminates in $2k|S|^2 + \mathcal{O}(|A|^{k+1})$ operations. Indeed, the first term is the number of operations needed to compute the probabilities $p_1^l(w)$ and $p_2^l(w)$ for $l = 1, \dots, k$ (see Remark~\ref{rem:prob_complexity}), and the second term is the number of operations corresponding for a DFS\footnote{Depth-First-Search, see e.g. \cite{Sedgewick2011} for an introduction.} in a tree of $|A|^{k+1}$ nodes with a constant number of operations at each node. In order to illustrate the gain in complexity compared to classical LP or CO methods, we provide in Figure~\ref{fig:complexity} the functions $|A|^{k+1}$ and $|A|^{2k}$ for $|A| = 2, 3, 4$ for $k = 1, \dots, 15$ (see Remark~\ref{rem:lp}). %Given the probabilities $p_1^l$ and $p_2^l$, the first one is the number of operations needed to terminate Algorithm~\ref{alg:kantorovich} times a constant, and the second one corresponds to the best possible LP/CO algorithm to solve \eqref{eq:kant_LP}. 

% \begin{proposition}
%     For any two labeled Markov chains $\Sigma_1$ and $\Sigma_2$, \textnormal{\textsc{Cantor-Kantorovich}}$(\Sigma_1, \Sigma_2, \varepsilon)$ terminates in $\mathcal{O}\left(|\mathcal{S}|^2|A|^{\lceil \log_2(\varepsilon^{-1}) \rceil + 2}\right)$ operations.
% \end{proposition}
% \begin{proof}
%     The recursive Algorithm~\ref{alg:kantorovich} is a DFS\footnote{Depth-First-Search, see e.g. \cite{Sedgewick2011} for an introduction.} in a tree of $|A|^{k+1}$ nodes. On each node, one has to compute $p_1^k(w)$, $p_2^k(w)$, which can be done in $\mathcal{O}(|\mathcal{S}|^2)$ operations following Remark~\ref{rem:prob_complexity}, compute a minimum, and then perform one sum operation. Therefore, the total number of operations is $\mathcal{O}(|\mathcal{S}|^2|A|^{k+1})$, and the claim follows by Remark~\ref{rem:approximation}.
% \end{proof}

\if\useieeelayout1
\begin{figure}[t]
    \centering
    \includegraphics[width = 0.7\linewidth]{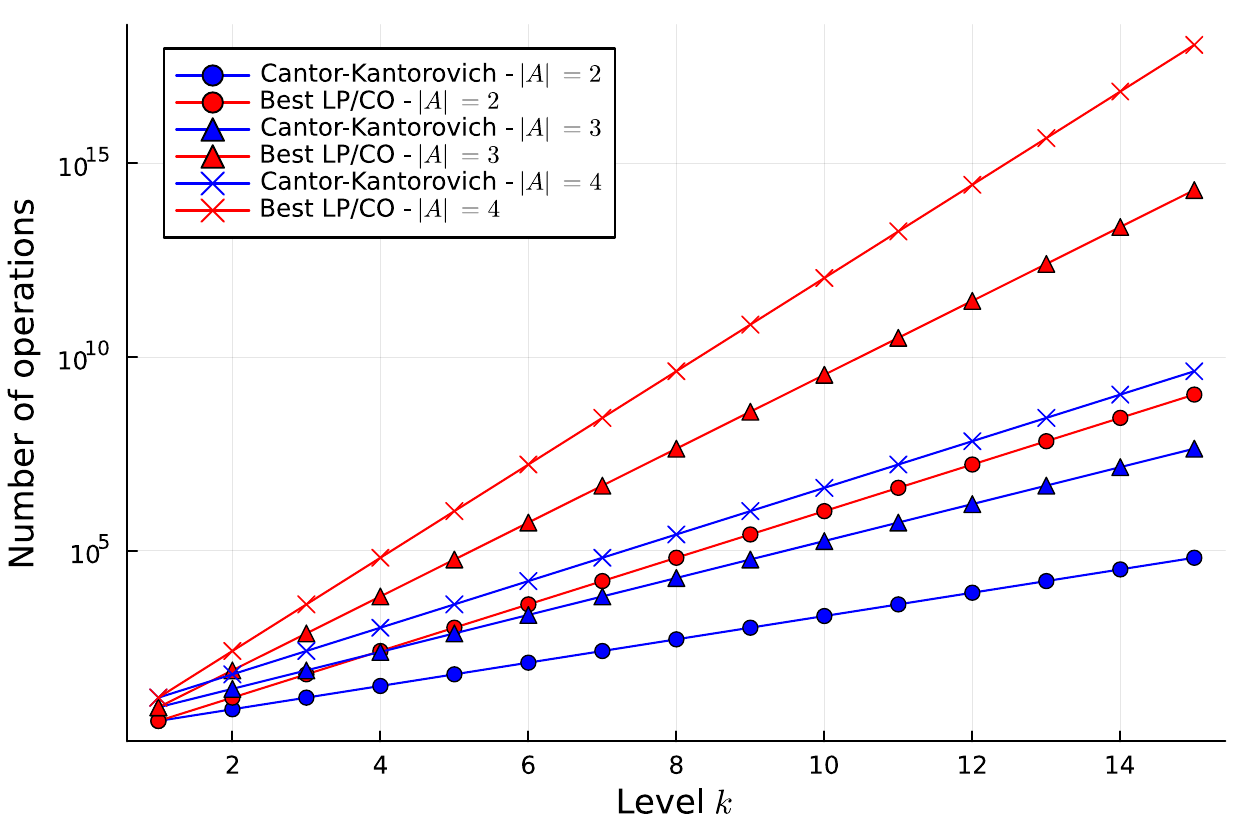}
    \caption{Comparaison of the computational complexity of two methods to solve \eqref{eq:kant_LP} for $|A| = 2, 3, 4$. In red, the best LP/CO method, and in blue our method as described in Algorithm~\ref{alg:kantorovich}.}
    \label{fig:complexity}
\end{figure}
\else 
\begin{figure}[ht!]
    \centering
    \includegraphics[width = 0.5\linewidth]{figures/complexity_level.pdf}
    \caption{Comparaison of the computational complexitiy of two methods to solve \eqref{eq:kant_LP} for $|A| = 2, 3, 4$. In red, the best LP/CO method, and in blue our method as described in Algorithm~\ref{alg:kantorovich}.}
    \label{fig:complexity}
\end{figure}
\fi 

\section{Application: CK metric for abstractions} \label{sec:application}
In this section, we apply the \textsc{Refine} procedure described in Algorithm~\ref{alg:adaptive} with the Cantor-Kantorovich metric $\mathsf{CK}$ to abstract a given dynamical system. 

\subsection{Electron subject to Lorentz force} \label{sec:lorentz_description} 

We are interested in the position of an electron subject to the Lorentz force (see e.g. \cite{grant2013electromagnetism}). The law is given by 
\begin{equation} \label{eq:lorentz1}
    m\begin{pmatrix}
        \dot{v}_1(t)\\\dot{v}_2(t)\\\dot{v}_3(t)
    \end{pmatrix}
    =
    q \begin{pmatrix} 
        E_1 + B_3 v_2(t) - B_2 v_3(t) \\ 
        E_2 + B_1 v_3(t) - B_3 v_1(t) \\ 
        E_3 + B_2 v_1(t) - B_1 v_2(t)
    \end{pmatrix},
\end{equation}
% \begin{equation} \label{eq:lorentz1}
% \begin{aligned}
%     m\begin{pmatrix}
%         \dot{v}_1(t)\\\dot{v}_2(t)\\\dot{v}_3(t)
%     \end{pmatrix}
%     &=
%     q \begin{pmatrix} E_1 \\ E_2 \\ E_3 \end{pmatrix} + 
%     q \begin{pmatrix} v_1(t) \\ v_2(t) \\ v_3(t) \end{pmatrix}
%     \times 
%     \begin{pmatrix} B_1 \\ B_2 \\ B_3 \end{pmatrix}\\
%     &=
%     q \begin{pmatrix} 
%         E_1 + B_3 v_2(t) - B_2 v_3(t) \\ 
%         E_2 + B_1 v_3(t) - B_3 v_1(t) \\ 
%         E_3 + B_2 v_1(t) - B_1 v_2(t)
%     \end{pmatrix},
% \end{aligned}
% \end{equation}
where $v_1(t)$, $v_2(t)$ and $v_3(t)$ are respectively the $x$-axis, $y$-axis and $z$-axis components of the velocity of the electron in [m/s]. Every other constant and their unities are given in Table~\ref{tab:variables}.
\begin{table}[ht!]
    \centering
    \footnotesize
    \begin{tabular}{llll}
    Symbol & Physical quantity & Unit & Value  \\ \hline
    $m$ & Mass & [kg] & $9.1 \times 10^{-31}$ \\
    $q$ & Charge & [C] & $1.6 \times 10^{-19}$ \\
    $E_1$ & $x$-axis electric field & [V/m] & $-1.0 \times 10^{-10}$ \\
    $E_2$ & $y$-axis electric field & [V/m] & $5.0 \times 10^{-11}$ \\
    $E_3$ & $z$-axis electric field & [V/m] & $0$ \\
    $B_1$ & $x$-axis magnetic field & [T] & $0$ \\
    $B_2$ & $y$-axis magnetic field & [T] & $0$ \\
    $B_3$ & $z$-axis magnetic field & [T] & $1.0 \times 10^{-11}$ \\
    \end{tabular}
    \caption{Constants in the Lorentz force equation \eqref{eq:lorentz1}}
    \label{tab:variables}
\end{table}

Since $B_1 = B_2 = E_3 = 0$, then $v_3(t) = 0$ and the dynamical system can be written on a 2D-plane. Moreover, we are interested in the position of the electron, rather than its velocity: hence, the dynamical equations become the description of a $4$-dimensional affine dynamical system given by
\begin{equation} \label{eq:continuous_affine}
\begin{aligned}
    \begin{pmatrix}
        \dot{p}_1(t) \\
        \dot{p}_2(t) \\
        \dot{v}_1(t) \\
        \dot{v}_2(t) 
    \end{pmatrix}
    &= 
    \begin{pmatrix}
        0 & 0 & 1 & 0\\
        0 & 0 & 0 & 1\\
        0 & 0 & 0 & qB_3/m \\
        0 & 0 & -qB_3/m & 0
    \end{pmatrix}
    \begin{pmatrix}
        p_1(t) \\
        p_2(t) \\
        v_1(t) \\
        v_2(t) 
    \end{pmatrix} \\
    & + 
    \begin{pmatrix}
        0 \\ 0 \\
        qE_1 / m \\
        qE_2 / m
    \end{pmatrix}, 
\end{aligned}
\end{equation}
where $p_1(t)$ and $p_3(t)$ respectively denote the $x$-axis and $y$-axis components of the position of the electron.

We study this dynamical system in discrete time. To do that, we approximate the derivative with the explicit Euler scheme $\dot{x}(t) \approx (x_{t+1} - x_t)/h$, with $h = 0.1$, which gives $x_{t+1} = (I + hA)x_t + hb$, with $x_t = (p_{1, t}, p_{2, t}, v_{1, t}, v_{2, t}) \in \mathbb{R}^4$, $I \in \mathbb{R}^{4 \times 4}$ the identity matrix, and $A \in \mathbb{R}^{4 \times 4}$ and $b \in \mathbb{R}^4$ respectively the matrix and vectors of the continous-time affine system given in \eqref{eq:continuous_affine}. We consider an obstacle defined by $O = [0.5, 1.5] \times [-0.5, 0.5] \times \mathbb{R} \times \mathbb{R}$. The output function $h$ is defined as
\begin{equation}
    h(x_t) = \begin{cases}
        0 &\text{if } x_t \in O, \\
        1 &\text{else if } p_{1, t}, \geq 1.5 \\
        2 &\text{otherwise.}
    \end{cases}
\end{equation}
Finally, we take the uniform measure on $[-1, 4] \times [-1, 1] \times [-1, 1] \times [-1, 1]$ for sampling the initial state, which defines $\lambda$. An illustration of the dynamical system $\Sigma$ is given in Figure~\ref{fig:lorentz}.

\if\useieeelayout1
\begin{figure}[t]
    \centering
    \includegraphics[width = 0.8\linewidth]{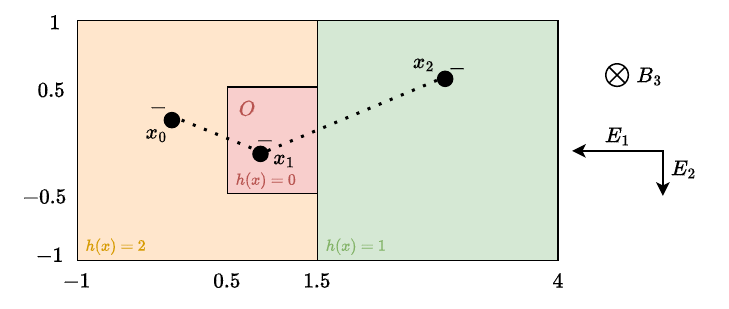}
    \caption{Illustration of the Lorentz force dynamical system $\Sigma$ as defined in Section~\ref{sec:lorentz_description}.}
    \label{fig:lorentz}
\end{figure}
\else
\begin{figure}[ht!]
    \centering
    \includegraphics[width = 0.6\linewidth]{figures/lorentz.pdf}
    \caption{Illustration of the Lorentz force dynamical system $S$ as defined in Section~\ref{sec:lorentz_description}.}
    \label{fig:lorentz}
\end{figure}
\fi 

\subsection{Abstraction-based analysis}
In this section, we are interested in approximating the initial safe set measure 
\begin{equation}
    P_H = \lambda(\{x_0 \in \mathbb{R}^4 : x_t \in \mathbb{R}^4 \setminus O \, \forall t = 0, \dots, H\}), 
\end{equation}
where $H \in \mathbb{N}$ is a given horizon. We show that our procedure $\textsc{Refine}$ with the Cantor-Kantorovich metric $\mathsf{CK}$ yields better results than explicit grid-based approaches. In all of our numerical experiments, we approximated the Cantor-Kantorivich metric with the $\textsc{Cantor-Kantorovich}$ algorithm descripted in Algorithm~\ref{alg:kantorovich} with $\varepsilon = 10^{-3}$.

We proceed as follows. We are given a labeled Markov chain $\Gamma = (S, A, \tau, \mu, l)$ abstracting the dynamical system where each state $s \in S$ corresponds to a block of a partition of the state space. If a block has a non-empty intersection with the obstacle $O$, we label the corresponding state as unsafe, and the other ones as safe. The set of safe states is denoted as $S_{\text{safe}}$. To approximate $P_H$, for a given level of confidence $\beta \in (0, 1)$, we identify the set of states from which the probability of a safe random walk is greater than $1 - \beta$, noted $S_{\beta}$ and formally defined as the set of $s_0 \in S$ such that 
\begin{equation}
    \sum_{s_1 \in S_{\text{safe}}} \tau_{s_0, s_1} \dots \sum_{s_H \in S_{\text{safe}}} \tau_{s_{H-1}, s_H} \geq 1 - \beta.
\end{equation}
Then, for a given $\beta$, $P_H$ is approximated as 
\begin{equation}
    P_H \approx \sum_{s_0 \in S_\beta} \mu_{s_0}.
\end{equation}

We compare two different approaches. In the first one, we uniformly grid the state space. If each dimension is divided into $p$ parts, there will be $p^4$ blocks, noted $B_1, \dots, B_{p^4}$. From this partition, we compute a labeled Markov chain where the states correspond to the $p^4$ blocks with initial measure $\mu_{B_i} = \lambda(B_i)$, and where the transition probability from $B_i$ to $B_j$ is 
\begin{equation}
    P_{B_i, B_j} = \frac{
        \lambda(f^{-1}(B_j) \cap B_i)
    }{\lambda(B_i)}.
\end{equation}
We consider two labeled Markov chains constructed in this way, respectively with $p = 2$ (16 states) and $p = 3$ (81 states). The second approach is to compute the adaptive memory abstraction $\Gamma_W$ with $W = \textsc{Refine}(\Sigma, \lambda, \mathsf{CK}, N)$. Again we study two cases, $N = 6$ (15 states) and $N = 14$ (31 states). An illustration of the two approaches can be found in Figure~\ref{fig:smart_vs_naive}.

\if\useieeelayout1
\begin{figure}[t]
    \centering
    \begin{subfigure}[b]{\linewidth}
        \centering
        \includegraphics[width=0.7\textwidth]{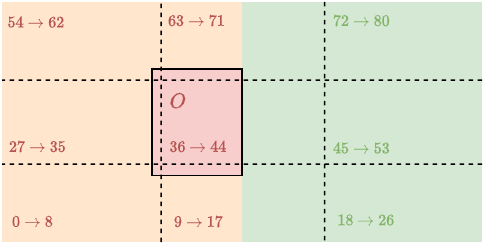}
        \caption{Partition with classical approach.}
        \label{fig:naive}
    \end{subfigure}
    \\
    \begin{subfigure}[b]{\linewidth}
        \centering
        \includegraphics[width=0.7\textwidth]{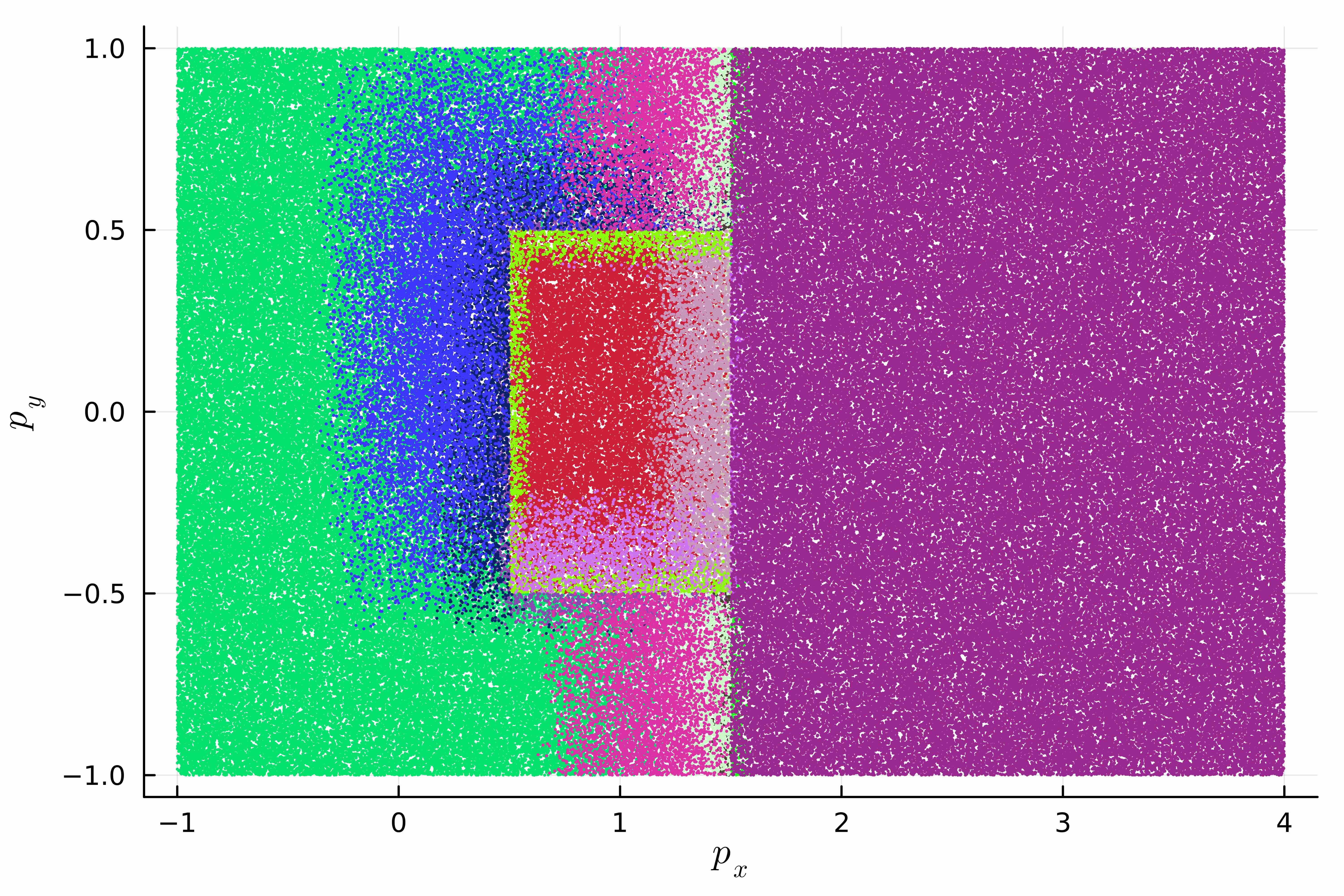}
        \caption{Partition with Algorithm~\ref{alg:adaptive}.}
        \label{fig:smart}
    \end{subfigure}
    \caption{Comparison between a classical partition and the partition computed with Algorithm~\ref{alg:adaptive} on a four dimensional example (here projected on $(p_x, p_y)$). Above: the classical approach, where each dimension is divided into three parts, which gives a total of $3^4 = 81$ blocks. Below: our adaptive memory abstraction with $N=6$, which leads to $15$ blocks in the partition.  Since the algorithm does not explicitely compute a closed form expression for the blocks, we  illustrate them by coloring samples. While there are less blocks here than the $81$ blocks in the above partition, one sees that they are computed smartly, and the critical places in the system (close to the obstacles) benefit from a finer resolution than the large regions away from the obstacle.}
    \label{fig:smart_vs_naive}
\end{figure}
\else
\begin{figure}[ht!]
    \centering
    \begin{subfigure}[b]{0.6\linewidth}
        \centering
        \includegraphics[width=0.7\textwidth]{figures/naive_81.pdf}
        \caption{Partition with classical approach.}
        \label{fig:naive}
    \end{subfigure}
    \\
    \begin{subfigure}[b]{0.6\linewidth}
        \centering
        \includegraphics[width=0.8\textwidth]{figures/smart_light.jpg}
        \caption{Partition with Algorithm~\ref{alg:adaptive}.}
        \label{fig:smart}
    \end{subfigure}
    \caption{Comparison between a classical partition and the partition computed with Algorithm~\ref{alg:adaptive} on a four dimensional example (here projected on $(p_x, p_y)$). Above: the classical approach, where each dimension is divided into three parts, which gives a total of $3^4 = 81$ blocks. Below: our adaptive memory abstraction with $N=6$, which leads to $15$ blocks in the partition.  Since the algorithm does not explicitely compute a closed form expression for the blocks, we  illustrate them by coloring samples. While there are less blocks here than the $81$ blocks in the above partition, one sees that they are computed smartly, and the critical places in the system (close to the obstacles) benefit from a finer resolution than the large regions away from the obstacle.}
    \label{fig:smart_vs_naive}
\end{figure}
\fi

We compare the approximation of $P_H$ with the four abstractions descripted above, and for different confidence levels. The approximation results and the true probability $P_H$ can be found in Figure~\ref{fig:beta}. We start by discussing the results of the abstractions generated by the \textsc{Refine} algorithm (green and orange lines on Figure~\ref{fig:beta}). When the asked level of confidence is too strict, for example $\beta = 0.01$, then one needs more states to well approximate $P_H$, as we can see on Figure~\ref{fig:beta0.01}. On the other hand, when the level of confidence is too high, for example $\beta = 0.25$, then it is more likely to to overapproximate $P_H$, as one can see on Figure~\ref{fig:beta0.25} for $H = 4$ with the 15 states abstraction. For this example, the confidence level is reasonable when it is equal to $0.05$, as one can see on Figure~\ref{fig:beta0.05}. In this case, the 31 states abstraction perfectly approximates $P_H$ until $H = 8$. Now, one can see that the approximation is better using a the \textsc{Refine} algorithm compared to the na\"ive approach. Indeed the former with 15 states always estimates better $P_H$ than the latter with 81 states. 

\if\useieeelayout1
\begin{figure}[t]
    \centering
    \begin{subfigure}[b]{\linewidth}
        \centering
        \includegraphics[width=0.8\textwidth]{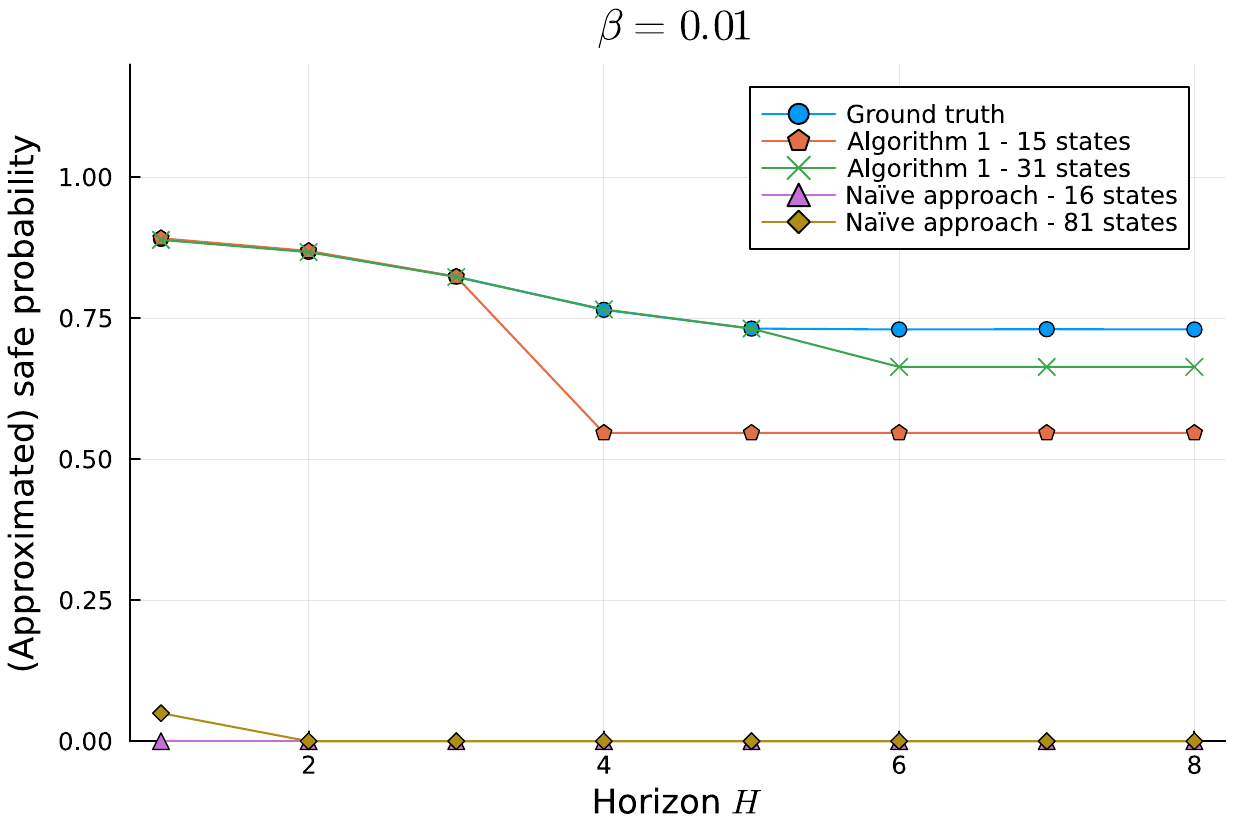}
        \caption{Confidence level $\beta = 0.01$. The confidence level is too low, but with 31 states the \textsc{Refine} abstraction still approximates well $P_H$.}
        \label{fig:beta0.01}
    \end{subfigure}
    \\
    \begin{subfigure}[b]{\linewidth}
        \centering
        \includegraphics[width=0.8\textwidth]{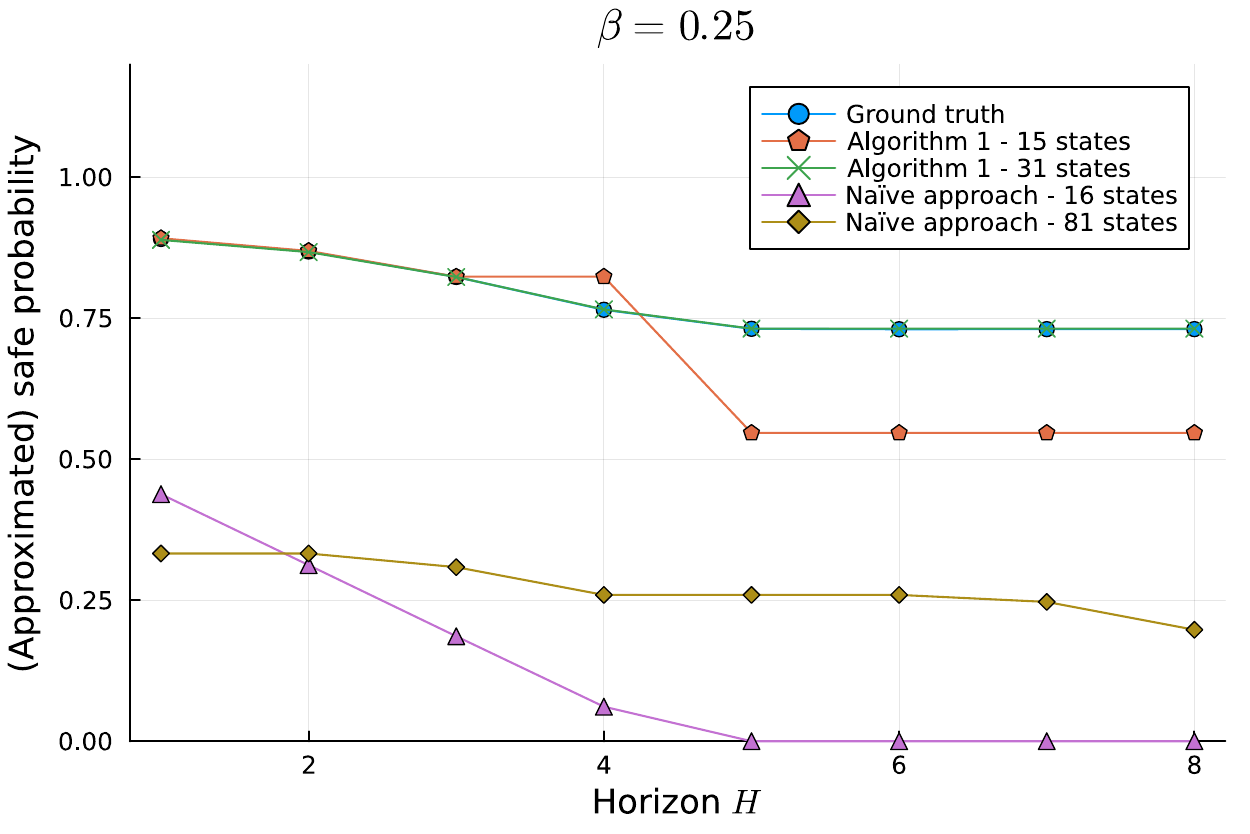}
        \caption{Confidence level $\beta = 0.25$. The confidence level is too high, and $P_H$ may be overapproximated.}
        \label{fig:beta0.25}
    \end{subfigure}
    \\
    \begin{subfigure}[b]{\linewidth}
        \centering
        \includegraphics[width=0.8\textwidth]{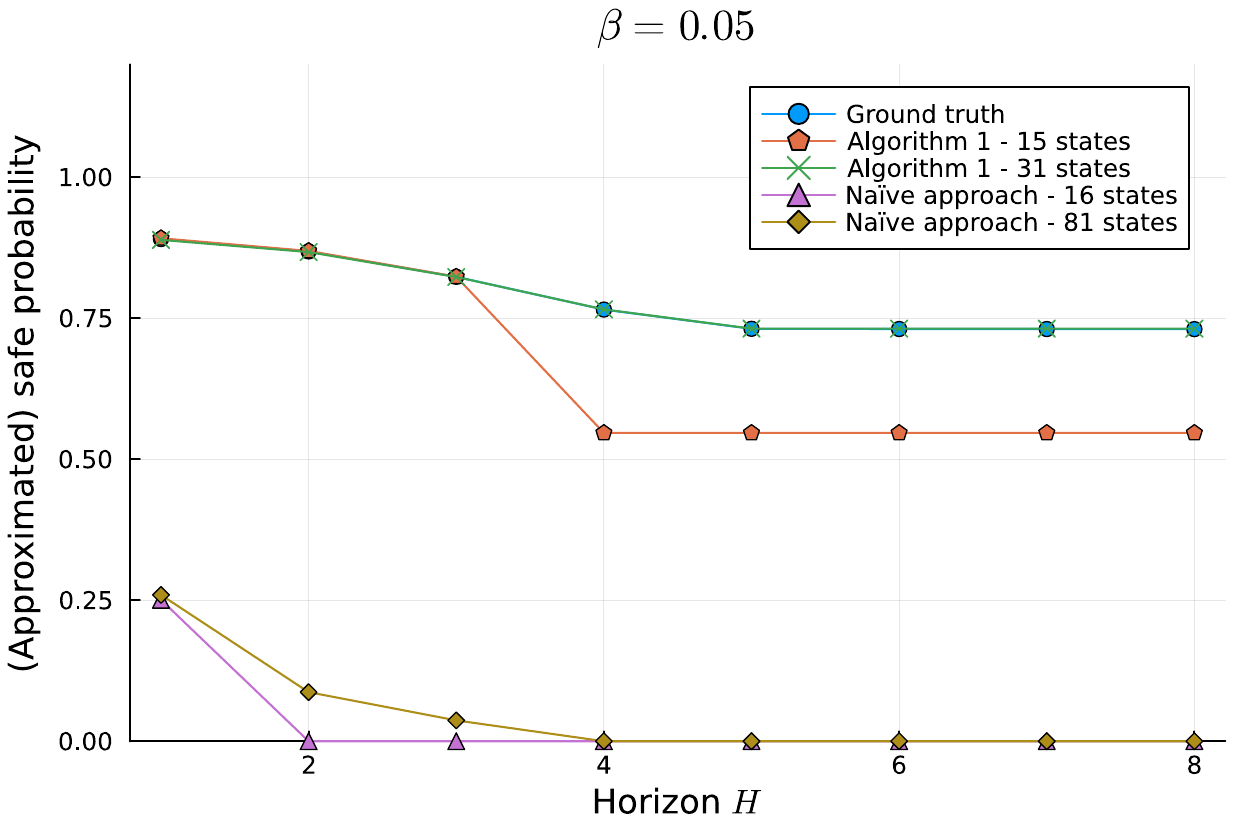}
        \caption{Confidence level $\beta = 0.05$. The confidence level is reasonable, and the \textsc{Refine} abstractions yield good results with both 15 and 31 states.}
        \label{fig:beta0.05}
    \end{subfigure}
    \caption{Approximation of $P_H$ for $H = 0, \dots, 8$ with different abstractions and for different confidence levels $\beta$.}
    \label{fig:beta}
\end{figure}
\else 
\begin{figure}[ht!]
    \centering
    \begin{subfigure}[b]{0.6\linewidth}
        \centering
        \includegraphics[width=0.8\textwidth]{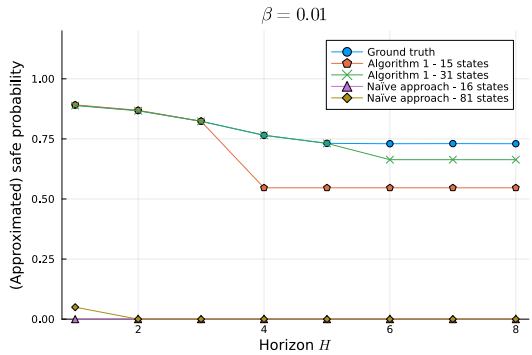}
        \caption{Confidence level $\beta = 0.01$. The confidence level is too low, but with 31 states the \textsc{Refine} abstraction still approximates well $P_H$.}
        \label{fig:beta0.01}
    \end{subfigure}
    \\
    \begin{subfigure}[b]{0.6\linewidth}
        \centering
        \includegraphics[width=0.8\textwidth]{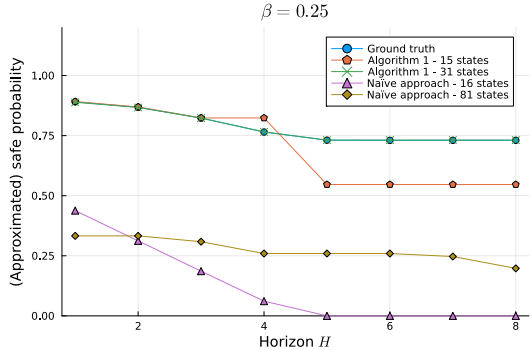}
        \caption{Confidence level $\beta = 0.25$. The confidence level is too high, and $P_H$ may be overapproximated.}
        \label{fig:beta0.25}
    \end{subfigure}
    \\
    \begin{subfigure}[b]{0.6\linewidth}
        \centering
        \includegraphics[width=0.8\textwidth]{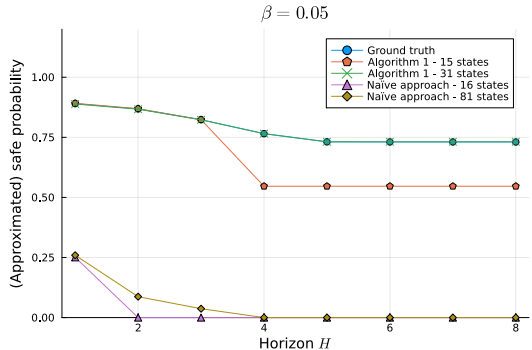}
        \caption{Confidence level $\beta = 0.05$. The confidence level is reasonable, and the \textsc{Refine} abstractions yield good results with both 15 and 31 states.}
        \label{fig:beta0.05}
    \end{subfigure}
    \caption{Approximation of $P_H$ for $H = 0, \dots, 8$ with different abstractions and for different confidence levels $\beta$.}
    \label{fig:beta}
\end{figure}
\fi 

\section{Conclusions} \label{sec:conclusions}

In this work, we have introduced Markov models encoding adaptive-memory schemes to abstract dynamical systems via samples. We have proved that the abstractions  preserve safety properties of the abstracted systems. Along with this, given a notion of metric between Markov models, we have proposed a generic and tuneable procedure to choose a convenient memory scheme. We have then introduced a Cantor-Kantorovich metric for Markov chains, proving  that the latter satisfies positivity, symmetry and triangle inequality, and we have provided an algorithm to approximate it efficiently. We have finally performed an abstraction-based safety analysis on a example corresponding to a real-life dynamical system. We showed that our method yields better approximations than classical, grid-based abstraction  approaches in the litterature. 

For further work, we would like to test our method with other metrics between Markov models, and compare numerical results. For example, one could investigate metrics that encode control specifications, unlike our metric. On the other hand, we would like to improve the algorithm for computing the Cantor-Kantorovich metric. 

\if\useieeelayout1
\appendices
\else
\appendix
\fi

\section{Consequence of Assumption~\ref{ass:mc_well_defined}} \label{app:stochastic_matrix}
\begin{proposition} \label{prop:mc_well_defined}
    Given a dynamical system $\Sigma$, a measure $\lambda$ and a set $W$ that defines a partition, if Assumption~\ref{ass:mc_well_defined} is satisfied, then the adaptive memory abstraction $\Gamma_W = (W, A, \tau, \mu, l)$ has a stochastic transition matrix $\tau$.
\end{proposition}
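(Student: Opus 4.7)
The plan is to fix an arbitrary state $w_{[-s,t]} \in W$ and show that its outgoing transition probabilities sum to one. Writing the conditional probability in \eqref{eq:cond_MC} as a ratio of measures, one has
\begin{equation*}
    \tau_{w_{[-s,t]}, w'_{[-s',t']}} = \frac{\lambda\!\left(f^{-1}\!\left([w'_{[-s',t']}]\right) \cap [w_{[-s,t]}]\right)}{\lambda\!\left([w_{[-s,t]}]\right)}.
\end{equation*}
By Assumption~\ref{ass:mc_well_defined}, the denominator is strictly positive, so each entry is well-defined and nonnegative.

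The key observation is that if $W$ satisfies the partition conditions \eqref{eq:partition_cond_1}--\eqref{eq:partition_cond_2}, then the family $\{f^{-1}([w'_{[-s',t']}])\}_{w' \in W}$ also forms a partition of $X$. Indeed, preimages commute with unions and intersections, so
\begin{equation*}
    \bigcup_{w' \in W} f^{-1}\!\left([w'_{[-s',t']}]\right) = f^{-1}(X) = X,
\end{equation*}
and the sets remain pairwise disjoint (this uses only that $f$ is a function from $X$ to $X$; invertibility is not required for this part). Intersecting with $[w_{[-s,t]}]$, which is fixed, preserves both the covering and the disjointness.

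Using $\sigma$-additivity of $\lambda$, the row sum therefore collapses to
\begin{equation*}
    \sum_{w'_{[-s',t']} \in W} \tau_{w_{[-s,t]}, w'_{[-s',t']}} = \frac{\lambda\!\left([w_{[-s,t]}] \cap X\right)}{\lambda\!\left([w_{[-s,t]}]\right)} = 1.
\end{equation*}
Combined with nonnegativity of each entry, this gives that $\tau$ is a stochastic matrix. The only obstacle worth flagging is the bookkeeping that the partition structure on $X$ is transferred through $f^{-1}$ to a partition of $X$ that remains compatible with the fixed block $[w_{[-s,t]}]$; everything else is just rewriting the definition of conditional probability and applying additivity of $\lambda$.
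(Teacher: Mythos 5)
Your proof is correct and follows essentially the same route as the paper's: both reduce the claim to showing each row sum equals one, observe that the preimages $f^{-1}([w'_{[-s',t']}])$ of the partition blocks again partition $X$, and conclude by additivity of $\lambda$ (the law of total probability). Your additional remarks — that invertibility of $f$ is not needed for the preimage-partition step and that Assumption~\ref{ass:mc_well_defined} guarantees the denominators are positive — are accurate but do not change the argument.
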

\begin{proof}
    By \eqref{eq:cond_MC} and Assumption~\ref{ass:mc_well_defined}, it suffices to show that, for all $w_{[-s, t]} \in W$, 
    \begin{equation} \label{eq:ltp_toprove}
        \sum_{w'_{[-s', t']} \in W}
        \lambda\left(f^{-1}\left(\left[ w'_{[-s', t']} \right]\right) \cap \left[ w_{[-s, t]} \right]\right)
        = \lambda\left(\left[ w_{[-s, t]} \right]\right)
    \end{equation}
    is satisfied. Since the sets $f^{-1}\left(\left[w_{[-s, t]}\right]\right)$ also form a partition of $X$, \eqref{eq:ltp_toprove} holds by the law of total probability, and the proof is completed.
\end{proof}

\section{Procedure $\textsc{Refine}$ yields a partition} \label{app:procedure}
\begin{assumption} \label{ass:CDC_ass}
    The dynamical system $\Sigma$ and measure $\lambda$ are such that, for all $k$, for all $w \in A^k$, if $\lambda([w]_{[0, k-1]}) = 0$, then $[w]_{[0, k-1]} = \emptyset$.
\end{assumption}
\begin{proposition}
    Suppose that Assumption~\ref{ass:CDC_ass} holds. Let $\Sigma$ be a dynamical system with initial measure $\lambda$, and let $\mathsf{D}$ be a metric between labeled Markov chains. Then, for all $N \in \mathbb{N}_{> 0}$ the output of $\textnormal{\textsc{Refine}}(\Sigma, \lambda, \mathsf{D}, N)$ defines a partition of $X$ and the matrix of the corresponding abstraction is stochastic.
    % Given a system $\Sigma$ and a measure $\lambda$ that satisfy Assumption~\ref{ass:CDC_ass}, and a metric $\mathsf{D}$ and $N \in \mathbb{N}_{> 0}$, the set of words $W = \textnormal{\textsc{Refine}}(\Sigma, \lambda, \mathsf{D}, N)$ defines a partition of $X$ and the transition matrix of the corresponding abstraction $\Gamma_W = (S, A, \tau, \mu, l)$ is stochastic.
\end{proposition}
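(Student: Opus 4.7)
The plan is to proceed by induction on the iteration counter $n$ in Algorithm~\ref{alg:adaptive}, maintaining throughout the invariant that the current set $W$ satisfies both partition conditions \eqref{eq:partition_cond_1} and \eqref{eq:partition_cond_2}, and moreover that every block $[w_{[-s,t]}]$ with $w_{[-s,t]} \in W$ has strictly positive $\lambda$-measure. Once this invariant is established, Assumption~\ref{ass:mc_well_defined} holds for the returned $W$, and the stochasticity claim follows immediately from Proposition~\ref{prop:mc_well_defined}.

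For the base case (after lines 1--\ref{line:4}), I would first observe that the singleton words $\{(a)_{[0,0]}\}_{a \in A}$ cover $X$ disjointly, since every $x \in X$ lies in exactly one block $[(a)_{[0,0]}]$ determined by $h(x)$. Removing the letters in $A_0$ whose blocks have zero $\lambda$-measure does not break \eqref{eq:partition_cond_1}: by Assumption~\ref{ass:CDC_ass}, each such $[(a)_{[0,0]}]$ is in fact empty, so deleting these blocks leaves the union unchanged, while disjointness is trivially preserved by removing elements from a disjoint collection.

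For the inductive step, I would fix iteration $n$ and assume the invariant for the current $W$. The greedy rule selects one block $w^{j}_{[0, t_j]}$ and replaces it by the collection $\{(w^j a)_{[0, t_j + 1]}\}_{a \in A}$. The identity
\begin{equation}
    \bigl[w^{j}_{[0, t_j]}\bigr] = \bigsqcup_{a \in A} \bigl[(w^j a)_{[0, t_j + 1]}\bigr],
\end{equation}
which follows directly from the definition of $[\cdot]$, shows that this one-step refinement preserves both \eqref{eq:partition_cond_1} and \eqref{eq:partition_cond_2}. The subsequent pruning in lines \ref{line:11}--\ref{line:12} discards only blocks of zero measure, which by Assumption~\ref{ass:CDC_ass} are empty, hence the cover remains complete and disjoint, and the positive-measure invariant is restored for the next iteration.

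The main obstacle, and the unique point where Assumption~\ref{ass:CDC_ass} is indispensable, is the pruning step: without this hypothesis, a block could carry zero $\lambda$-measure while still being nonempty, and removing such a block would puncture $X$ and violate \eqref{eq:partition_cond_1}. Under Assumption~\ref{ass:CDC_ass} this pathology is ruled out, so the invariant survives every pass through the outer loop. At termination, the returned $W$ is therefore a partition into positive-measure blocks, Assumption~\ref{ass:mc_well_defined} holds, and Proposition~\ref{prop:mc_well_defined} delivers the stochasticity of the transition matrix of $\Gamma_W$.
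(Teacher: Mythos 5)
Your proof is correct and follows essentially the same route as the paper's: induction on the iterations of \textsc{Refine}, with the base case handled by the definition of $h$, the refinement step justified by $[w^j_{[0,t_j]}] = \bigcup_{a \in A}[(w^j a)_{[0,t_j+1]}]$, Assumption~\ref{ass:CDC_ass} invoked exactly where the paper invokes it (to show pruned zero-measure blocks are empty), and the conclusion delegated to Proposition~\ref{prop:mc_well_defined}. The only cosmetic difference is that the paper establishes the invariant for every candidate $W'_i$ rather than just the selected one, but this does not change the substance of the argument.
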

\begin{proof}
    By Proposition~\ref{prop:mc_well_defined}, we have to show that each current model $\Gamma_W$ is such that $W$ defines a partition of $X$ and satisfies Assumption~\ref{ass:mc_well_defined}. We prove the claim by induction. 

    First, we prove that the set $W$ such as defined after line~\ref{line:4} defines a partition and satisfies Assumption~\ref{ass:mc_well_defined}. The set $W = \{(a)_{[0, 0]}\}_{a \in A}$ defines a partition by definition of the output function $h$. Now, the set $W \setminus \{(a)_{[0, 0]}\}_{a \in A_0}$ satisfies Assumption~\ref{ass:mc_well_defined} by construction. Moreover, it still defines a partition because the sets $[a_{[0, 0]}]$ for $a \in A_0$ are all empty by Assumption~\ref{ass:CDC_ass}.

    We now assume that $W$ defines a partition and satisfies Assumption~\ref{ass:mc_well_defined}. We prove that, for any $w^i_{[0, t_i]} \in W$, the set $W'_i$ such as after \ref{line:12} is a partition and satisfies Assumption~\ref{ass:mc_well_defined}. We first show that the expanded set $\tilde{W} = (W \setminus \{w^i_{[0, t_i]}\} ) \cup \{(w^ia)_{[0, t_i + 1]}\}_{a \in A}$ defines a partition. Condition~\ref{eq:partition_cond_1} is satisfied because
    \begin{equation}
    \begin{aligned}
        &\bigcup_{w_{[0, t]} \in \tilde{W}} \left[w_{[0, t]}\right] \\
        =& \,  
        \left(\bigcup_{w_{[0, t]} \in W \setminus \left\{w^i_{[0, t_i]}\right\}} \left[w_{[0, t]}\right]\right) \cup \left(\bigcup_{a \in A} \left[(w^ia)_{[0, t_i+1]}\right]\right)\\
        =& \, 
        \left( \bigcup_{w_{[0, t]} \in W \setminus \left\{w^i_{[0, t_i]}\right\}} \left[w_{[0, t]}\right] \right) \cup \left[w^i_{[0, t_i]}\right], 
    \end{aligned}
    \end{equation}
    which is the whole state space $X$ by the recursion assumption. And condition~\ref{eq:partition_cond_2} is also satisfied because $\left[(w^ia)_{[0, t_i + 1]}\right] \subseteq \left[w^i_{[0, t_i]}\right]$ for all $a \in A$. Now, with $W_0$ as defined in line~\ref{line:11}, the $\tilde{W} \setminus W_0$ satisfies Assumption~\ref{ass:mc_well_defined} by construction. Finally, it still defines a partition because the sets $[w_{[0, t]}]$ for $w_{[0, t]} \in W_0$ are empty by Assumption~\ref{ass:CDC_ass}, and the proof is completed.
\end{proof}

\section{Proof of Lemma~\ref{lemma:firstlemma}} \label{app:proof_firstlemma}
We first prove that $\pi^k(w, w) \leq \min\{p_1^k(w), p_2^k(w)\}$. Constraints \eqref{eq:cons_LP_marg} imply that 
\begin{equation}
\begin{aligned}
p_1^k(w) = \pi^k(w, w) + \sum_{\substack{w' \in A^k\\ w \neq w'}} \pi^k(w, w') \geq \pi^k(w, w), \\
p_2^k(w) = \pi^k(w, w) + \sum_{\substack{w' \in A^k\\ w \neq w'}} \pi^k(w', w) \geq \pi^k(w, w), 
\end{aligned}
\end{equation}
which implies $\pi^k(w, w) \leq \min\{p_1^k(w), p_2^k(w)\}$. We now prove that $\pi^k(w, w) \geq \min\{p_1^k(w), p_2^k(w)\}$. By contradiction, let $\pi^k$ be an optimal solution to \eqref{eq:kant_LP} such that there exists $w \in A^k$ and $\varepsilon > 0$ with
$
\pi^k(w, w) = \min\{p_1^k(w), p_2^k(w)\} - \varepsilon.
$
Assume without loss of generality that $\min\{p_1^k(w), p_2^k(w)\} = p_1^k(w)$, then constraints \eqref{eq:cons_LP_marg} imply that 
\begin{enumerate}
\item there exists $w' \neq w$ such that $\pi^k(w, w') = \varepsilon'$ for some $\varepsilon' \in (0, \varepsilon]$, and 
\item there exists $w'' \neq w$ such that $\pi^k(w'', w) = \varepsilon''$ for some $\varepsilon'' \in (0, \varepsilon]$. 
\end{enumerate}
Let $\mathsf{K}_{\mathsf{C}}(p_1^k, p_2^k)$ denote the Kantorovich metric corresponding to such $\pi^k$. Now assume, again without loss of generality, that $\varepsilon' \leq \varepsilon''$. Consider then $(\pi^k)'$ such that $(\pi^k)'(w_1, w_2) = \pi^k(w_1, w_2)$ for all $w_1, w_2 \in A^k$ except 
\begin{enumerate}
\item $(\pi^k)'(w, w) = \pi^k(w, w) + \varepsilon'$, 
\item $(\pi^k)'(w, w') = \pi^k(w, w') - \varepsilon'$, 
\item $(\pi^k)'(w'', w') = \pi^k(w'', w') + \varepsilon'$, 
\item  $(\pi^k)'(w'', w) = \pi^k(w'', w) - \varepsilon'$.
\end{enumerate}
The joint distribution $(\pi^k)'$ is feasible since it satisfies constraints \eqref{eq:cons_LP_pos} and \eqref{eq:cons_LP_marg}. Now let $\mathsf{K}_{\mathsf{C}}'(p_1^k, p_2^k)$ denote the solution corresponding to such $(\pi^k)'$, it holds that 
\begin{equation}
\begin{aligned}
&\mathsf{K}_{\mathsf{C}}'(p_1^k, p_2^k) = \mathsf{K}_{\mathsf{C}}(p_1^k, p_2^k) \\
&\quad - \varepsilon'[\mathsf{C}(w, w') + \mathsf{C}(w', w'') - \mathsf{C}(w', w'')].
\end{aligned}
\end{equation}
By triangular inequality of $\mathsf{C}$, we have that $\mathsf{K}_{\mathsf{C}}'(p_1^k, p_2^k) < \mathsf{K}_{\mathsf{C}}(p_1^k, p_2^k)$, which contradicts the fact that $\pi^k$ is optimal, and the proof is completed. \hfill $\blacksquare$

\section{Proof of Lemma~\ref{lemma:secondlemma}} \label{app:proof_secondlemma}
For some $w \in A^{k-1}$, assume without loss of generality that $p_1^{k-1}(w) > p_2^{k-1}(w)$. First, by the constraints \eqref{eq:cons_LP_marg}, it holds that
\begin{equation}
    \sum_{\substack{w' \in A^{k-1}\\w' \neq w}} \sum_{a_1, a_2 \in A} \pi^k(wa_1, w'a_2) \geq p_1^{k-1}(w) - p_2^{k-1}(w). 
\end{equation}
Now, we proceed similarly as for Lemma~\ref{lemma:firstlemma}. Suppose by contradiction that 
$
    \sum_{\substack{w' \in A^{k-1}\\w' \neq w}} \sum_{a_1, a_2 \in A} \pi^{k}(wa_1, w'a_2) > p_1^{k-1}(w) - p_2^{k-1}(w). 
$
Then there exists $w' \neq w \in A^{k-1}$, and $a_1, a_2 \in A$ such that $\pi^{k}(w'a_1, wa_2) = \varepsilon' > 0$. There also exists $w'' \in A^{k-1}$ such that $w'' \neq w$ and $w'' \neq w'$, and $a_3, a_4 \in A$ such that $\pi^{k}(wa_3, w''a_4) = \varepsilon'' > 0$. Assume w.l.o.g. that $\varepsilon' \leq \varepsilon''$, and consider a solution $(\pi^{k})'$ such that $(\pi^{k})' = \pi^{k}$, except for 
\begin{enumerate}
\item $(\pi^{k})'(wa_3, w''a_4) = \pi^{k}(wa_3, w''a_4) - \varepsilon'$, 
\item $(\pi^{k})'(w'a_1, wa_2) = \pi^{k}(w'a_1, wa_2) - \varepsilon'$, 
\item $(\pi^{k})'(w'a_1, w''a_4) = \pi^{k}(w'a_1, w''a_4) + \varepsilon'$, and
\item $(\pi^{k})'(wa_3, wa_2) = \pi^{k}(wa_3, wa_2) + \varepsilon'$.
\end{enumerate} 
The joint distribution $(\pi^{k})'$ is feasible since it still satisfies the constraints~\eqref{eq:cons_LP_pos} and \eqref{eq:cons_LP_marg}. Note that, since the Cantor distance satisfies the strong triangular inequality (see Lemma~\ref{lemma:ultra}), 
\begin{equation}
\begin{aligned}
    \mathsf{C}(w'a_1, w''a_4) &\leq \max\{\mathsf{C}(w'a_1, wa_2), \mathsf{C}(wa_2, w''a_4)\}\\
    &= \max\{\mathsf{C}(w'a_1, wa_2), \mathsf{C}(wa_3, w''a_4)\}.
\end{aligned}
\end{equation}
Moreover, $\mathsf{C}(wa_3, wa_2) = 2^{-k}$. Now let $\mathsf{K}_{\mathsf{C}}'(p_1^{k}, p_2^{k})$ denote the solution corresponding to such $(\pi^{k})'$, we have that $\mathsf{K}_{\mathsf{C}}(p_1^{k}, p_2^{k}) - \mathsf{K}_{\mathsf{C}}'(p_1^{k}, p_2^{k})$ is
\begin{equation}
\begin{aligned}
    &\mathsf{K}_{\mathsf{C}}(p_1^{k}, p_2^{k}) - \mathsf{K}_{\mathsf{C}}'(p_1^{k}, p_2^{k}) = 
    - \varepsilon'\left[
        \begin{array}{ll}
            +& \mathsf{C}(w'a_1, wa_2) \\
            +& \mathsf{C}(wa_3, w''a_4) \\
            -& \mathsf{C}(w'a_1, w''a_4)\\ 
            -& 2^{-k}
        \end{array}
    \right]\\
    &\quad \leq - \varepsilon'\left[
        \begin{array}{ll}
            +& \mathsf{C}(w'a_1, wa_2) + \mathsf{C}(wa_3, w''a_4) \\
            -&\max\{\mathsf{C}(w'a_1, wa_2), \mathsf{C}(wa_3, w''a_4)\}\\ 
            -& 2^{-k}
        \end{array}
    \right] \\
    &\quad \leq - \varepsilon'[2^{-(k-1)} - 2^{-k}] \\
    &\quad \leq 0, 
\end{aligned}
\end{equation}
which contradicts the fact that $\pi^{k}$ is optimal. \hfill $\blacksquare$

\if\useieeelayout0 \if\review1 \nolinenumbers \fi \fi

\if\useieeelayout1 \section*{References} \fi
\bibliographystyle{ieeetr}
\bibliography{itac23.bib}

\if\useieeelayout1
\begin{IEEEbiography}[{\includegraphics[width=1in,keepaspectratio]{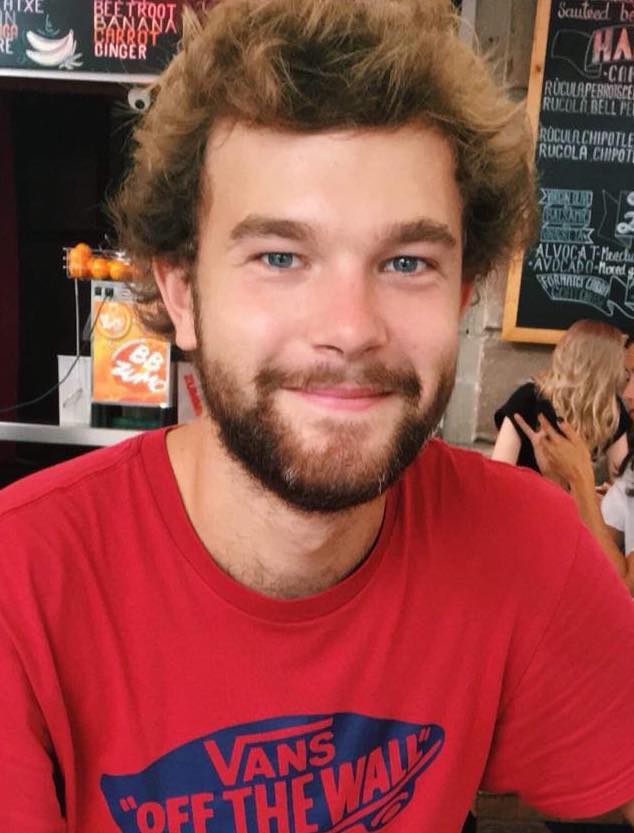}}]{Adrien Banse} received his M.Sc. degree in Mathematical Engineering from UCLouvain, Belgium, in 2022. How is now a Ph.D. candidate at the Department of Mathematical Engineering in the Institute of Information and Communication Technologies (INMA/ICTEAM) at UCLouvain. His research interests lie are at the intersection of Computer Science and Applied Mathematics, including data-driven formal methods and analysis of dynamical systems. He is a FRIA/FNRS fellow. 
\end{IEEEbiography}
\begin{IEEEbiography}[{\includegraphics[width=1in, keepaspectratio]{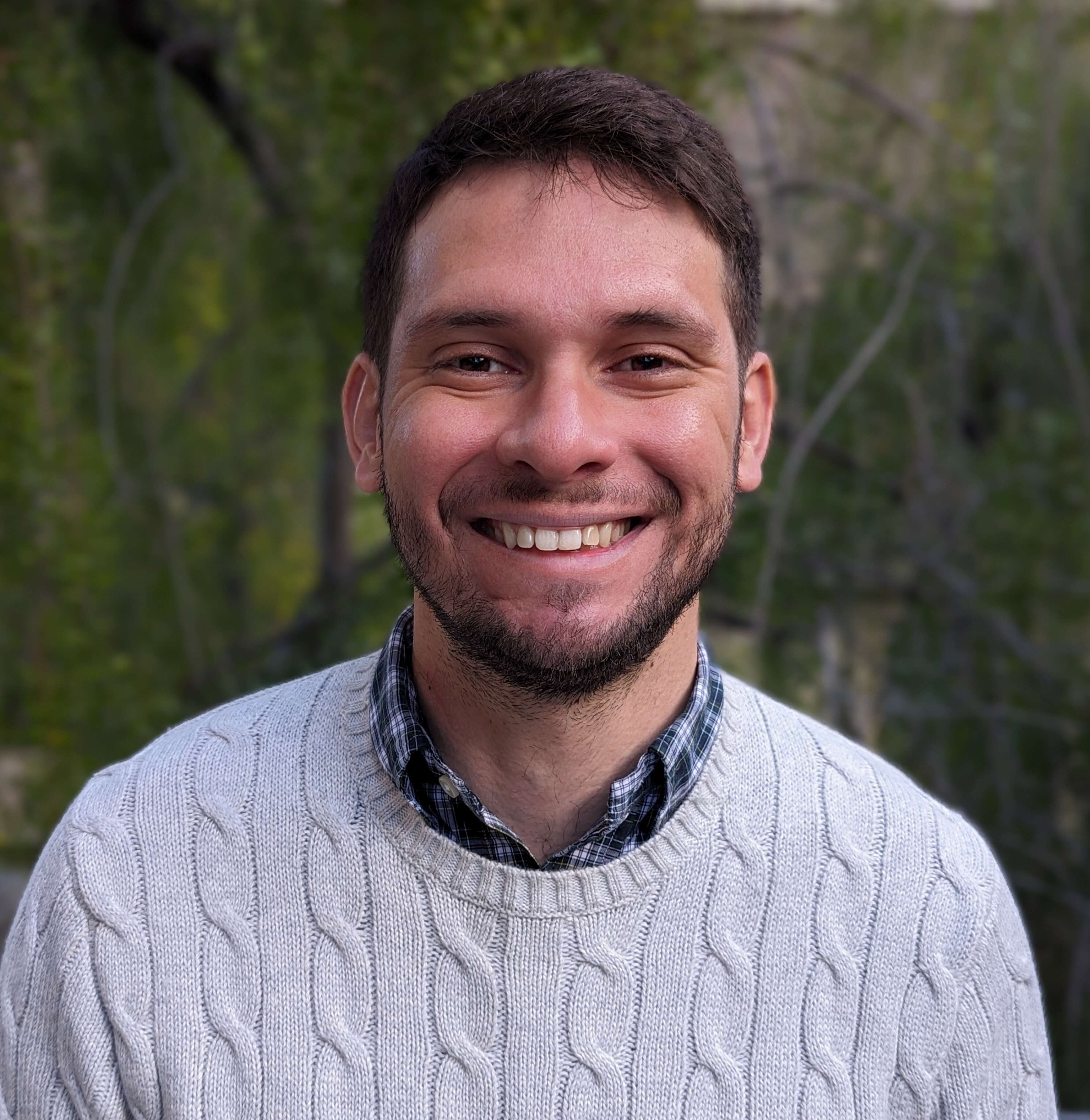}}]{Licio Romao} received the B.Sc. degree in electrical engineering from the Federal University of Campina Grande (UFCG), Campina Grande,
Brazil, in 2014, the M.Sc. degree in electrical engineering from the University of Campinas (UNICAMP), Campinas, Brazil, in 2017, and the D.Phil. (Ph.D.) degree in engineering science from the University of Oxford, Oxford, U.K., in
2021. He is currently a Postdoctoral Researcher with the Department of Aeronautics and Astronautics, Stanford University, USA, and has previously worked as a postdoctoral researcher with the Department of Computer Science, University of Oxford. His research focuses on the design of feedback controllers for stochastic dynamical systems. It leverages tools from formal methods, optimisation, statistics, stochastic calculus, and control theory. He is a recipient of the 2021 IET Control and Automation Doctoral Dissertation prize.

\end{IEEEbiography}
\begin{IEEEbiography}[{\includegraphics[width=1in, keepaspectratio]{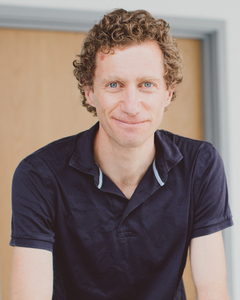}}]{Alessandro Abate} is Professor of Verification and Control in the Department of Computer Science at the University of Oxford. Earlier, he did research at Stanford University and at SRI International, and was an Assistant Professor at the Delft Center for Systems and Control, TU Delft. He received MS and PhD from the University of Padua and UC Berkeley.  
His research interests lie on the formal verification and control of stochastic hybrid systems, and in their applications to cyber-physical systems, particularly involving safety criticality and energy. He blends in techniques from probability theory and logic, machine learning and AI, towards assured autonomy.

\end{IEEEbiography}
\begin{IEEEbiography}[{\includegraphics[width=1in, keepaspectratio]{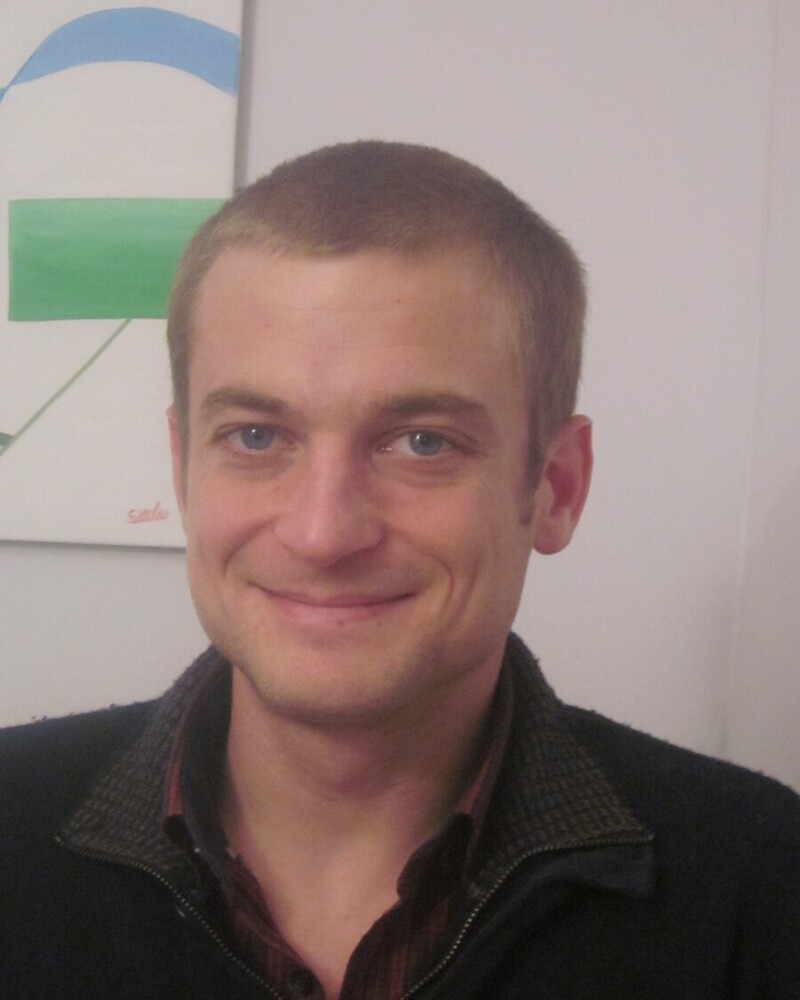}}]{Rapha\"el M. Jungers} is a Professor at UCLouvain, Belgium. His main interests lie in the fields of Computer Science, Graph Theory, Optimization and Control. He received a Ph.D. in Mathematical Engineering from UCLouvain (2008), and a M.Sc. in Applied Mathematics, both from the Ecole Centrale Paris, (2004), and from UCLouvain (2005). He has held various invited positions, at the Universit\'e Libre de Bruxelles (2008-2009), at the Laboratory for Information and Decision Systems of the Massachusetts Institute of Technology (2009-2010), at the University of L’Aquila (2011, 2013, 2016), and at the University of California Los Angeles (2016-2017). He is a FNRS, BAEF, and Fulbright fellow. He has been an Associate Editor for the IEEE CSS Conference Editorial Board, and the journals NAHS, Systems and Control Letters, and IEEE Transactions on Automatic Control. He was the recipient of the IBM Belgium 2009 award and a finalist of the ERCIM Cor Baayen award 2011. He was the co-recipient of the SICON best paper award 2013-2014.
\end{IEEEbiography}
\fi

\end{document}